\theoremstyle{plain}
\newtheorem{problem}{Problem}[section]
\newtheorem{theorem}{Theorem}[section]
\newtheorem{lemma}[theorem]{Lemma}
\theoremstyle{remark}
\theoremstyle{definition}
\begin{document}

%\jvol{00} \jnum{00} \jyear{2015} \jmonth{January}

%============================================================
% AUTHOR DEFINITIONS

% \DeclareGraphicsRule{.tif}{png}{.png}{`convert #1 `dirname #1`/`basename #1 .tif`.png}
% \def \figi#1 {\adjustbox{valign=m,vspace=1pt}{\includegraphics*[viewport=145 225 460 569,scale=0.30]{figures/{#1}}}}
% \newcommand{\figii}[1]{\adjustbox{valign=m,vspace=1pt}{\includegraphics*[viewport=145 220 469 569,scale=0.30]{figures/{#1}}}}
% \newcommand{\figiii}[1]{\adjustbox{valign=m,vspace=1pt}{\includegraphics*[viewport=80 222 528 564,scale=0.28]{figures/{#1}}}}
% \newcommand{\figiv}[1]{\adjustbox{valign=m,vspace=1pt}{\includegraphics*[viewport=87 220 515 573,scale=0.31]{figures/{#1}}}}
%
\def\spreme{{\sc{spreme}}}
\def\beq{\begin{equation}}
\def\eeq{\end{equation}}
\def\bea{\begin{eqnarray}}
\def\eea{\end{eqnarray}}
\def\A{{\mathbf{A}}}
\def\bsigma{{\mbox{\boldmath$\sigma$}}}
\def\bep{{\mbox{\boldmath$\epsilon$}}}
\def\ep{{\mbox{$\epsilon$}}}
\def\bxi{{\mbox{\boldmath$\xi$}}}
\def\u{{\mbox{\boldmath$u$}}}
\def\v{{\mbox{\boldmath$v$}}}
\def\d{{\mbox{\boldmath$d$}}}
\def\t{{\mbox{\boldmath$t$}}}
\def\T{{\mathbf{T}}}
\def\I{{\mathbf{I}}}
\def\B{{\mathbf{B}}}
\def\x{{\mbox{\boldmath$x$}}}
\def\y{{\mbox{\boldmath$y$}}}
\def\w{{\mbox{\boldmath$w$}}}
\def\K{{\mbox{\boldmath$K$}}}
\def\f{{\mbox{\boldmath$f$}}}
\def\matlab{{\sc{matlab}}}
\def\spreme{{\sc{spreme}}}
\def\tr{\mbox{tr}}

%============================================================

% \articletype{SUBMISSION}

\title{\textit{
Recovering vector displacement estimates in quasistatic elastography using sparse 
relaxation of the momentum equation}}

\author{
% \name{
{
Olalekan A.\ Babaniyi\textsuperscript{a}, 
Assad A.\ Oberai\textsuperscript{b}, 
Paul E.\ Barbone\textsuperscript{a}$^{\ast}$\thanks{$^\ast$Corresponding author. Email: barbone@bu.edu}
} \\
% \affil
\textsuperscript{a}Mechanical Engineering, Boston University, Boston, MA, USA;\\
\textsuperscript{b}Mechanical Aerospace and Nuclear Engineering, RPI, Troy, NY, USA
% \received{\today}
}

\maketitle

\begin{abstract}
We consider the problem of estimating the $2D$ vector displacement field in a heterogeneous 
elastic solid deforming under plane stress conditions.  The problem is motivated by applications 
in quasistatic elastography.  
From precise and accurate measurements of one component of the $2D$ vector displacement field and very limited information of the second component, the method reconstructs the second component quite accurately. 
%We present a novel method to tackle the lateral displacement estimation problem in ultrasound elasticity imaging.  
No a priori knowledge of the heterogeneous distribution of material properties is required. This method relies on using a special form of the momentum equations to filter ultrasound displacement measurements to produce more precise estimates.  
We verify the method with applications to simulated displacement data.  We validate 
the method with applications to displacement data measured from a tissue mimicking phantom, and in-vivo data;  significant improvements are noticed in the filtered displacements recovered from all the tests.  In verification studies, error in lateral displacement estimates decreased from about $50\%$ to about $2\%$, and strain error decreased from more than $250\%$ to below $2\%$.  
\end{abstract}

\section{Introduction}
Ultrasound elasticity imaging (UEI), or elastography, is a rapidly growing field, primarily due to the increasing interest in the non-invasive quantification of the mechanical properties of soft tissues
\cite{gao1996imaging,ophir1999elastography,parker2005unified,greenleaf2003selected,parker2011imaging,doyley2012model}. A crucial step in UEI is the estimation of a tissue's motion. This motion is typically generated by a quasistatic compression, and the deformation of the tissue is measured with ultrasound. The measured deformation and an appropriate mathematical model can then be used to infer the mechanical properties of the soft tissue. 

Ultrasound measurements of tissue deformation are typically much more precise in the ``axial" 
direction (i.e.\ the direction of the ultrasound beam), than in the so-called ``lateral" direction, the image direction transverse to the direction of the ultrasound beam.  
% Only one component of the displacement field can be measured precisely with typical beamforming using an ultrasound device: the component along the direction of sound propagation. The other displacement components are available at such low precision that little or no useful information can typically be gained from them. 
The origin of the difference in precision is due to the anisotropy of the ultrasound point spread function (PSF).  The width of the peak of the radio frequency (RF) PSF, in the axial direction (the direction of sound propagation) %depends on the frequency of the sound waves and the location where the sound is focussed (also known as focal depth). In the lateral direction (the direction perpendicular to the axial direction), however, the resolution of the PSF depends on the beam width, the aperture size, and the focal depth \cite{book:tlszabo}. The resolution of the PSF in the the axial direction 
is roughly an order of magnitude smaller than that in the lateral direction \cite{book:tlszabo}.

With only one component of the displacement field measured precisely, it is impossible to compute precise estimates of the full strain tensor field, which would otherwise be useful information to researchers who do strain imaging.   Furthermore, one goal of measuring the displacement field is as input to an inverse problem to estimate material property distributions \cite{barbone2010review,doyley2012model}.   
Computing the material properties from only one component of the displacement field is computationally intensive \cite{barbone2010review,doyley2012model}.  In that situation, an iterative inversion approach is required, which may takes hundreds to thousands of iterations to converge.  Such an iterative inversion code usually needs displacement boundary conditions to drive the forward problem, and using the noisy measurements as boundary conditions can corrupt the material property estimates \cite{jDordLNLinear,goenezen2012linear,richards2009quantitative}. 
On the other hand, efficient direct methods to compute material properties from full vector displacement field measurements exist 
\cite{
albocher2009adjoint, 
barbone2010adjoint, 
albocher2013approaches}.  
The present contribution uses an inverse problem approach that aims to improve lateral displacement estimates from ultrasound measurements. 
%With only one component of the displacement field, unique estimates of the shear modulus distribution cannot be computed \cite{barbone2002quantitative}. 

Much research has been performed in order to recover better lateral displacements both in the ultrasound elasticity imaging and the ultrasound blood flow imaging field. There are four main approaches that dominate the literature. In \cite{konofagou1998new,geiman2000novel,konofagou2000precision,chen2004lateral,ebbini2006phase,brusseau20082,deprez20093d,kim2011autocorrelation,rivaz2011real,mailloux1987computer,basarab2008method} the authors use various forms of interpolation within standard or modified block-matching algorithms to estimate subsample lateral displacements. These approaches try to make the best use of available data as is, but do not overcome the fundamental difficulty posed by the anisotropic PSF.
An alternative approach is to use beam steering to measure displacements at arbitrary angles, and then use the arbitrary angled displacement estimates to determine the axial and lateral displacements as described in \cite{tanter2002ultrafast,techavipoo2004estimation,rao2007normal,hansen2010full,xu2013normal,abeysekera2012analysis}. This method uses additional image data to overcome the limitations of a single PSF. 
A third approach, pursued by \cite{korukonda2011estimating,korukonda2012visualizing,jensen1998new,liebgott2007psf}, is to use novel beam forming techniques to modulate the RF signal in the lateral direction so that it contains more phase information. This approach tries to reduce the anisotropy of the PSF.
A fourth approach, pursued by \cite{richards2009quantitative,lubinski1996lateral,skovoroda1998nonlinear,o2001strain}, is to constrain the displacements to satisfy the incompressibility condition. Many different types of soft tissue can be assumed to be incompressible because they are mostly composed of water.

Among the existing approaches, the incompressibility processing method introduced in \cite{lubinski1996lateral} is arguably the most similar in spirit to the method presented here to reconstruct better lateral displacements. In that paper, the authors assumed that the deformation is linear, incompressible, and plane strain in character. With these assumptions, they formulated a minimization problem where the objective was to find lateral displacements that minimized the error between the measurements and predicted lateral displacements that satisfied the modeling assumptions. The incompressibility processing described in \cite{o2001strain} is basically the same as the method developed in \cite{lubinski1996lateral}, with the addition of weighting functions to the lateral displacement reconstruction equations. These weighting functions are based on the absolute value and the gradient of the cross-correlation function. In \cite{skovoroda1998nonlinear}, the method developed in \cite{lubinski1996lateral} is extended to accommodate large plane strain deformations. Finally, in \cite{richards2009quantitative}, the incompressiblility constraint is used to improve the lateral and elevational displacements for a 3D deformation. 

In this paper, we introduce a fundamentally new approach to evaluating lateral displacements.  We will be using a spatial regularization term within an inverse problem formulation to adaptively enforce, and locally relax, a special form of the momentum conservation equation on the measured displacement field. 
%We will be using a spatial regularization term to sparsely enforce the equilibrium equations on the strains calculated from the input displacements. 
Special weighting functions will be used to place more emphasis on the axial component of the displacement field.

%We choose to use special weighting functions so as to be able to effectively use only the axial component of the measured displacement field to calculate all components of the filtered displacements and strains in the domain. The way that this will work is that the axial component of the filtered displacements and strains will be calculated from the measured displacement, and the other components of displacement and strain will be calculated from the spatial regularization term enforcing the equilibrium equation. The spatially adaptive regularization term is designed to exploit the situation in which large regions of tissue in the image have homogeneous properties. In these homogeneous regions, the equilibrium equation with constant coefficient become enforced quite strongly; in heterogeneous regions, the regularization is adaptively relaxed.
%%will be used because we expect the input data that we will be processing to have a piecewise constant modulus distribution. The special form of the equilibrium equation, with constant coefficients, will therefore not be satisfied at every location in the domain. 
%The calculated strains will also be constrained to be piecewise smooth by the regularization term. The physical and smoothness constraints should improve the precision of the displacement and strain fields.

The rest of the paper is arranged as follows: section \ref{sec:fomu} describes the new formulation to estimate better lateral displacements; we refer to this method as ``{\spreme},'' which stands for SParse RElaxation of the Momentum Equation. Section \ref{sec:veri} describes the simulation experiments used to test and verify the displacement filtering algorithm. Section \ref{sec:vald} describes the phantom experiments used to validate the performance of the displacement filtering algorithm on ultrasound measured data. Section \ref{sec:app} describes the in-vivo experiments used to further test the performance of the displacement filtering algorithm on patient data. Finally, a brief summary and conclusion is given in section \ref{sec:summConc} along with possible future directions of the work. The appendix describes a simpler formulation that produces apparently improved displacement and strain fields. Despite appearances, however, the resulting strain and displacement fields are non-physical.

\section{Formulation} 
\label{sec:fomu}
We suppose we are given measured displacements in a 2D region $\Omega$. We further suppose the measurements of the axial displacement, $u_y$, are significantly more precise and accurate than the measurements of the lateral displacement. We make use of the assumption that within the observed plane, the body's deformation may be well approximated by the plane stress approximation. These assumptions lead to the following problem statement:

\begin{problem}
Given 2D measured displacement $\u_{m}(x,y)$, for all $\x\in\Omega$, find the 2D displacement vector $\u(x,y)$, and the 2D linearized strain tensor $\bep(x,y)$ such that:
\bea
\pi_{o}[\u] &=& \frac{1}{2}||(\u_{m} - \u)||^{2}_{_{\mbox{T}}} \nonumber \\
 &=& \frac{1}{2} \int_{\Omega} (\u_{m} - \u) \cdot \T(\u_{m} - \u) \, d\Omega \label{eq:f1}
\eea
is minimized, and:
\beq
\bep = \nabla^s \u \label{eq:f2}
\eeq
and:
\beq
\nabla \cdot \A = 0 \qquad \mbox{a.e.} \qquad \mbox{in $\Omega$} \label{eq:f3}
\eeq
where:
\beq
\A(\bep) = 2\tr(\bep)\I + 2\bep. \label{eq:f13}
%\mbox{where} \qquad \A[\bep]  &=& \epsilon_{kk}\I + \bep \qquad (k = 1,2). \label{eq:f13}
\eeq
and:
\beq
\nabla^s \u = \frac{1}{2}\left(\nabla\u + (\nabla\u)^{\mbox{T}}\right) \label{eq:f4}
\eeq
\end{problem}

With diagonal $\T$, equation (\ref{eq:f1}) may be written explicitly as:
\beq
\pi_{o}[\u] = \frac{1}{2} \int_{\Omega} [\mbox{T}_{xx}(u_{(m)x}-u_{x})^2 + \mbox{T}_{yy}(u_{(m)y} - u_{y})^2] \, d{\Omega}. \label{eq:f5}
\eeq
%In equation (\ref{eq:f5}), the weights (T$_{xx}$, T$_{yy}$) allow us to place more importance on one component of the displacement field than the other. Due to the nature of ultrasound, only one component of the displacement field can be measured precisely. Hence, $u_y$ (say) is significantly more precise than $u_x$. The weights allow us to use only the precise component of the measured displacement to calculate the predicted displacement field and strains.
In equation (\ref{eq:f5}), the weights (T$_{xx}$, T$_{yy}$) allow more importance to be placed on the accurate component of the displacement field while calculating the predicted displacement field. In equation (\ref{eq:f3}), a.e. means \emph{almost everywhere}, and implies that the constraint is enforced everywhere in the domain except on a set of measure zero (i.e. along curves and at isolated points).

To motivate constraint (\ref{eq:f3}), we consider a thin sheet in the $(x, y)$ plane made of solid material that is linearly elastic, isotropic, incompressible, and simply connected. We further assume that the deformation of the material is small, quasistatic,  and approximately plane stress in character. For these modeling assumptions, linear strain $\bep$ and Cauchy stress $\bsigma$ are defined and related as:
\bea
\epsilon_{ij} &=& \frac{1}{2} (\partial_{i} u_{j} + \partial_{j} u_{i}) \qquad (i,j = 1,2) \label{eq:f6}  \\
\bsigma &=& -p\I + 2\mu\bep \label{eq:f7}
\eea
where $\partial_{i}=\frac{\partial}{\partial{x_{i}}}$. For a plane stress deformation:
\beq
\sigma_{zz} = -p + 2\mu\epsilon_{zz} = 0. \label{eq:f8}
\eeq

For an incompressible material:
\beq
\epsilon_{xx} + \epsilon_{yy} + \epsilon_{zz} = 0. \label{eq:f9}
\eeq

Substituting (\ref{eq:f9}) into (\ref{eq:f8}) gives:
\beq
p = -2\mu(\epsilon_{xx} + \epsilon_{yy}). \label{eq:f10}
\eeq

Using (\ref{eq:f10}) in (\ref{eq:f7}) yields:
\bea
\bsigma &=& 2\mu(\epsilon_{xx} + \epsilon_{yy})\I + 2\mu\bep \label{eq:f11} \\
        &=& \mu\A. \label{eq:f12} 
\eea

The equilibrium equation, with no body force, is $\nabla \cdot \bsigma = 0$, or using (\ref{eq:f12}):
\beq
\nabla \cdot (\mu\A) = 0 \qquad \mbox{in $\Omega$}. \label{eq:f14}
\eeq

If $\mu(x,y)$ is piecewise constant, then $\nabla \cdot \A = 0$ a.e. in $\Omega$. 

\subsection{Uniqueness of solution}
\label{ssec:analy}
Given an axial displacement measurement $u_y$, the conditions necessary to guarantee the uniqueness of the lateral displacements computed using the equilibrium equation constraints will be derived in this section. This derivation begins with the assumption that the material has a piecewise homogeneous shear modulus distribution. This means that the material's domain, $\Omega$, can be divided into n different subdomains, $\Omega^{(n)}$, and the shear modulus in each subdomain is constant. The shear modulus is not constant in the interface between the subdomains, however. An illustration of this type of domain, for a special case when $n=2$, is shown in figure (\ref{fig:domain1}). 
%Within each subdomain, the equilibrium equation is independent of $\mu$.
The conditions necessary for uniqueness of the lateral displacements in each subdomain will be derived next.

%{\it Describe dividing the domain into piecewise homogeneous subregions, $\Omega^{(n)}$.  Et c.   Then within each subdomain... }

\subsubsection{Solution in a single subdomain}
\begin{lemma}\label{l1}
Given $u_y(x,y)$ in subdomain $\Omega^{(n)}$, in which $\mu \equiv$ constant.  
Then the general solution of equation (\ref{eq:f2}) - (\ref{eq:f4}) 
for $u_x(x,y)$, with $(x,y) \in \Omega^{(n)}$ is: 
\beq
u_x(x,y) = u_x^p(x,y) +  c_{_1}(x^2-4y^2) + c_{_2}x + c_{_3}y + c_{_4} \label{eq:a11p}
\eeq
\end{lemma}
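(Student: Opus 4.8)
The plan is to convert the coupled constraint system (\ref{eq:f2})--(\ref{eq:f4}) together with (\ref{eq:f13}) into an explicit pair of scalar partial differential equations in which $u_x$ is the unknown and $u_y$ enters only as a known forcing term. Because the map $\u \mapsto \nabla\cdot\A$ is linear in $\u$ and $\mu$ is constant on $\Omega^{(n)}$, the solution set for $u_x$ (for fixed $u_y$) is an affine space: it is any single particular solution $u_x^p$ plus the full kernel of the associated homogeneous operator. Establishing (\ref{eq:a11p}) therefore amounts to (i) deriving the scalar equations and (ii) computing that homogeneous kernel and showing it is exactly the four-parameter family displayed.

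First I would expand $\A$ componentwise using $\tr(\bep)=\epsilon_{xx}+\epsilon_{yy}$, giving $A_{xx}=4\epsilon_{xx}+2\epsilon_{yy}$, $A_{yy}=2\epsilon_{xx}+4\epsilon_{yy}$, and $A_{xy}=A_{yx}=2\epsilon_{xy}$, and then substitute the strain--displacement relations $\epsilon_{xx}=\partial_x u_x$, $\epsilon_{yy}=\partial_y u_y$, $\epsilon_{xy}=\tfrac12(\partial_y u_x+\partial_x u_y)$ from (\ref{eq:f2}), (\ref{eq:f4}). Taking the two components of $\nabla\cdot\A=0$ from (\ref{eq:f3}) then yields
\bea
4\,\partial_{xx}u_x + \partial_{yy}u_x &=& -3\,\partial_{xy}u_y, \label{eq:planA}\\
3\,\partial_{xy}u_x &=& -\left(\partial_{xx}u_y + 4\,\partial_{yy}u_y\right). \label{eq:planB}
\eea
Note this is an overdetermined system: two equations for the single unknown $u_x$, with all $u_y$-dependence collected on the right.

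For the homogeneous problem (right-hand sides set to zero, equivalently the difference of any two admissible $u_x$), (\ref{eq:planB}) reduces to $\partial_{xy}u_x=0$, which forces the additive separation $u_x=f(x)+g(y)$ on the connected subdomain $\Omega^{(n)}$. Substituting this into the homogeneous form of (\ref{eq:planA}) gives $4f''(x)+g''(y)=0$; since the first term depends only on $x$ and the second only on $y$, connectedness of $\Omega^{(n)}$ forces $4f''(x)=-g''(y)=k$ for a single constant $k$. Integrating, $f(x)=\tfrac{k}{8}x^2+c_2 x + \beta$ and $g(y)=-\tfrac{k}{2}y^2 + c_3 y + \delta$, so that $u_x = \tfrac{k}{8}(x^2-4y^2)+c_2 x + c_3 y + c_4$ with $c_4=\beta+\delta$; relabelling $c_1=k/8$ reproduces exactly the homogeneous part of (\ref{eq:a11p}). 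Adding a particular solution $u_x^p$ then gives the stated general solution.

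I expect the main obstacle to be the overdetermined character of (\ref{eq:planA})--(\ref{eq:planB}): I must check that imposing both equations simultaneously does not shrink the homogeneous kernel below the claimed four dimensions, i.e.\ that (\ref{eq:planA}) and (\ref{eq:planB}) are compatible and that the separation-of-variables argument captures \emph{every} homogeneous solution rather than a proper subset. The connectedness hypothesis on $\Omega^{(n)}$ is what legitimizes promoting the separated functions' second derivatives to a single global constant $k$; without it the kernel could be larger through piecewise choices of $k$. Existence of the particular solution $u_x^p$ is not in question here, since it is simply taken to be any solution of the inhomogeneous system generated by the given $u_y$; the content of the lemma, and the point relevant to the subsequent uniqueness discussion, is the precise structure and dimension of the homogeneous part.
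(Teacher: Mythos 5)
Your proposal is correct and follows essentially the same route as the paper's own proof: expanding $\nabla\cdot\A=0$ into the two scalar PDEs (your (planA)--(planB) are exactly the paper's equations (\ref{eq:a3})--(\ref{eq:a4})), splitting $u_x$ into particular plus homogeneous parts, and solving the homogeneous system by integrating $\partial_{xy}u_x^h=0$ to get $f(x)+g(y)$ and then separating variables to obtain the four-parameter family $c_{_1}(x^2-4y^2)+c_{_2}x+c_{_3}y+c_{_4}$. Your added remarks on the overdetermined structure and on connectedness of $\Omega^{(n)}$ are sound commentary on the same argument rather than a different method.
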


\begin{proof}
%We have the following equilibrium equation from equation (\ref{eq:f14}) with $\mu \equiv$ constant:
%\beq
%\nabla \cdot \A(\bep) = 0 \label{eq:a1}
%\eeq 
%%where:
%%\beq
%%\A(\u) = 2(\nabla \cdot \u) \I + \nabla \u + (\nabla \u)^\T. \label{eq:a2}
%%\eeq

Since $\mu = constant$ in $\Omega^{(n)}$,  we may expand equation (\ref{eq:f3})  in terms of displacements via equations (\ref{eq:f13}) and (\ref{eq:f2}), which results in the following equations:
\bea
-4u_{x,xx} - u_{x,yy} &=& 3u_{y,yx} \label{eq:a3} \\
-3u_{x,xy} &=& u_{y,xx} + 4u_{y,yy}. \label{eq:a4}
\eea

All the terms on the right hand side of both these equations are supposed to be known, since by assumption here, $u_y$ is known precisely. The terms on the left hand side involving the lateral displacements, $u_x$, are unknowns. We now show that these equations lead to a solution for $u_x$ that has at most four unknown constants.
We begin by splitting the total solution into homogeneous and particular parts.  %That that end, we let: 
To that end, we let:
\beq
% \mbox{$Let$} \quad 
u_x = u_x^p + u_x^h \label{eq:a12}
\eeq
Here $u_x^p$ is any particular solution of (\ref{eq:a3}) and (\ref{eq:a4}), which, by assumption, has no free constants or undetermined functions.   If they appear in the process of solving (\ref{eq:a3}) and (\ref{eq:a4}), then they shall be set to arbitrary values (e.g.\ zero) without loss of 
generality.   Any part of the general solution (\ref{eq:a12}) that is undetermined by $u_y$ is thus contained in 
$u_x^h$, the homogeneous solution of (\ref{eq:a3}) and (\ref{eq:a4}).  
% The particular solution of both these equations is determined from the given $u_y$ displacement. The homogeneous solution is unknown, however. 
The homogeneous solution satisfies:
\bea
4u_{x,xx}^h + u_{x,yy}^h &=& 0 \label{eq:a13} \\
3u_{x,xy}^h &=& 0. \label{eq:a14}
\eea

%\beq
%u_{y,x} = \mbox{$constant$}; \qquad u_{y,y} = \mbox{$constant$} \label{eq:a5}
%\eeq 
% substituting (\ref{eq:a5}) into 
Integrating (\ref{eq:a14}) twice yields:
\beq
u^h_x = f(x) + g(y). \label{eq:a6}
\eeq
Substituting (\ref{eq:a6}) into (\ref{eq:a13}) gives:
\beq
4f''(x) + g''(y) = 0. \label{eq:a7} 
\eeq
Equation (\ref{eq:a7}) implies:
\beq
4f''(x) = -g''(y) = 8c_{_1}. \label{eq:a8}
\eeq
Integrating (\ref{eq:a8}) for $f(x)$, and $g(y)$ yields:
\bea
f(x) &=& c_{_1}x^2 + c_{_2}x + c_{_4}' \label{eq:a9} \\
g(y) &=& -4c_{_1}y^2 + c_{_3}y + c_{_5}'. \label{eq:a10}
\eea
Combining (\ref{eq:a9}), (\ref{eq:a10}), in (\ref{eq:a6}) gives:
\beq
u_x^h = c_{_1}(x^2-4y^2) + c_{_2}x + c_{_3}y + c_{_4}, \label{eq:a11}
\eeq
where $ c_{_4} = c_{_4}' + c_{_5}'$. 

Therefore, the total solution is of the form (\ref{eq:a11p}).  
%\beq
%u_x(x,y) = u_x^p(x,y) +  c_{_1}(x^2-4y^2) + c_{_2}x + c_{_3}y + c_{_4} \label{eq:a11p}
%\eeq
\end{proof}

Equation (\ref{eq:a11p}) shows that within any homogeneous subregion of the body, given the axial displacement $u_y(x,y)$, 
then the 
% This derivation therefore proves that the computed 
lateral displacement $u_x(x,y)$ has at most 4 undetermined constants.

\subsubsection{Solution in connected subdomains covering entire domain of interest}
Interestingly, the form of the homogeneous solution is 
completely independent of the given $u_y(x,y)$.   Thus we may conclude almost 
immediately:
\begin{lemma}\label{l2}
Given $u_y(x,y)$ in $\Omega= \cup \Omega^{(n)}$, in which $\mu(x,y)$ is piecewise constant, such that 
\beq
\mu(x,y) = 
\begin{cases} \mu^{(n)} &\mbox{if } (x,y) \in \Omega^{(n)} 
 \end{cases}.
\eeq
Then the general {\em global} solution 
of equations (\ref{eq:f2}) - (\ref{eq:f4}) 
for $u_x(x,y)$ with $(x,y) \in \Omega = \cup \Omega^{(n)}$
is:
\beq
u_x(x,y) = u_x^{p}(x,y) 
   +  c_{_1}(x^2-4y^2) + c_{_2}x + c_{_3}y + c_{_4} 
\eeq
Here, $u_x^{p}(x,y)$ is determined entirely by $u_y(x,y)$, while $c_{_1}, \ldots c_{_4}$ are undetermined constants.  
\end{lemma}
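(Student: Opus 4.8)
The plan is to bootstrap directly from Lemma~\ref{l1}, exploiting the two features emphasized just above the statement: the expanded equilibrium equations (\ref{eq:a3})--(\ref{eq:a4}) contain no reference to $\mu$, and the homogeneous solution they admit is independent of the prescribed data $u_y$. First I would note that, because $\mu$ is constant on each $\Omega^{(n)}$, it factors out of $\nabla\cdot(\mu\A)=0$ and leaves exactly $\nabla\cdot\A=0$; hence the \emph{same} pair of differential equations (\ref{eq:a3})--(\ref{eq:a4}) governs $u_x$ in every subdomain, each driven by the single, globally prescribed field $u_y(x,y)$.

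Second, since both the differential operator and its forcing are identical across all subdomains, I would construct one particular solution $u_x^p(x,y)$ valid on all of $\Omega$, in place of a separate $u_x^{p,(n)}$ on each piece. This global $u_x^p$ is fixed entirely by $u_y$ and carries no free constants, which already establishes the first assertion of the lemma. The only remaining freedom, by Lemma~\ref{l1}, is the homogeneous polynomial of the form (\ref{eq:a11}) in each subdomain, namely $c_{_1}^{(n)}(x^2-4y^2)+c_{_2}^{(n)}x+c_{_3}^{(n)}y+c_{_4}^{(n)}$, so a priori there are four constants per subdomain.

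Third, I would impose the continuity of the physical displacement $u_x$ across each interface $\partial\Omega^{(n)}\cap\partial\Omega^{(m)}$, which any admissible single-valued displacement field must satisfy. Because $u_x^p$ is now one global function, continuity of $u_x$ collapses to continuity of the homogeneous parts, i.e.\ equality of the two quadratic polynomials along the shared interface curve. Their difference again belongs to the same four-parameter family, whose zero set is a conic --- a straight line when $c_{_1}=0$ and a hyperbola (a translate of $x^2-4y^2=\text{const}$) otherwise. Provided the interface is not contained in such a conic, the difference must vanish identically, forcing all four constants to agree across that interface; propagating this equality through the connected union $\Omega=\cup\,\Omega^{(n)}$ then reduces the $4n$ constants to a single quadruple $c_{_1},\ldots,c_{_4}$, giving the stated global form.

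The step I expect to be the real obstacle --- and the reason the conclusion is only ``almost'' immediate --- is excluding the degenerate configuration in which an interface coincides with one of these special conics. On such a curve, matching the two polynomials does not pin down all four constants, so extra undetermined freedom would survive. I would handle this by assuming the interfaces are in general position relative to the one-parameter family of hyperbolas $x^2-4y^2=\text{const}$ and the family of lines, or equivalently by observing that such coincidences are non-generic, so that for typical heterogeneous geometries the global solution carries exactly four undetermined constants as claimed.
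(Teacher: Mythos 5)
Your proposal is correct and follows essentially the same route as the paper's proof: apply Lemma~\ref{l1} in each subdomain, use continuity of $u_x$ across interfaces to force the difference of the homogeneous parts --- a function of the form $c_{_1}^{*}(x^2-4y^2)+c_{_2}^{*}x+c_{_3}^{*}y+c_{_4}^{*}$ --- to vanish along each interface, conclude the constants agree unless the interface lies in the zero set of such a function (the paper's ``special interfaces,'' your degenerate conics), and propagate equality through the connected union. The only cosmetic difference is that you posit a single global particular solution, while the paper keeps per-subdomain particular solutions and assumes they match continuously at interfaces ($u_x^{(n)p}=u_x^{(m)p}$); these assumptions are equivalent, so nothing substantive changes.
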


\begin{proof}
We consider several contiguous subregions, each of which is homogeneous.  From lemma \ref{l1}, the solution within subregion $n$ is given by:
\beq
u_x^{(n)}(x,y) = u_x^{(n)\ p}(x,y) +  c_{_1}^{(n)}(x^2-4y^2) + c_{_2}^{(n)}x + c_{_3}^{(n)}y + c_{_4}^{(n)}. \label{eq:a12p}
\eeq

If we consider the displacement $u_x(x,y)$ to be continuous at a non-special interface\footnote{A ``special interface''
would be one for which there exists $[c_1, c_2, c_3, c_4] = [c_1^*, c_2^*, c_3^*, c_4^*] \neq [0,0,0,0]$ s.t. 
$c_{_1}^{*}(x^2-4y^2) + c_{_2}^{*}x + c_{_3}^{*}y + c_{_4}^{*} = 0$ for all $(x,y)$ on the interface.}
between subregion $n$ and adjacent subregion $m$, and we assume that $u_x^{(n)p} = u_x^{(m)p}$ is continuous on the interface,
then we conclude: 
\beq
c_{_j}^{(n)} = 
c_{_j}^{(m)}.
\eeq
%To see this consider a parametrized curve representing the interface between two subregions. This curve contains an infinite number of points. To determine the constants for these two subregions, you will only need eight equations. The displacements have to be continuous at an infinite amount of points, however, so  the constants have to be the same on a non-special interface. For a special interface, the coefficient matrix for the constants will be singular, so we do not consider this case.
To see this, note that $u_x^{(n)} = u_x^{(m)}$, and $u_x^{(n)p} = u_x^{(m)p}$ leads to the equation:
\beq
\begin{bmatrix}
% \begin{matrix}
x^2-4y^2 & x & y & 1
\end{bmatrix}
% \end{matrix}
\begin{Bmatrix} c_{_1}^{*} \\ c_{_2}^{*} \\ c_{_3}^{*} \\ c_{_4}^{*} \end{Bmatrix} = 
\begin{Bmatrix} 0 \end{Bmatrix} \label{eq:a12.1p}
\eeq
with $c_{_j}^* = c_{_j}^{(n)} - c_{_j}^{(m)}$. We note that (\ref{eq:a12.1p}) holds at every point on the interface. Provided that  the functions $x^2-4y^2$, $x$, $y$, $1$ are linearly independent on the interfaces, then the only solution of (\ref{eq:a12.1p}) is $c_{_j}^* = 0$. Interfaces on which these functions are linearly dependent are special cases, and we call them ``special interfaces."

Since the $c_j$'s take the same value in every subregion, we may consider the homogeneous functions to be defined globally.  
Thus, the global solution is:
%\beq
%u_x(x,y) = \begin{cases}
%				 u_x^{(n)\ p}(x,y)  \& (x,y) \in \Omega^{(n)} 
%		\end{cases}
%   +  c_{_1}(x^2-4y^2) + c_{_2}x + c_{_4}y + c_{_6} \label{eq:a13p}
%\eeq
\beq
u_x(x,y) = u_x^{p}(x,y) 
   +  c_{_1}(x^2-4y^2) + c_{_2}x + c_{_3}y + c_{_4} \label{eq:a13p}
\eeq
%This means that by knowing the lateral displacements at four points or by having a noisy estimate of them throughout the domain, then they can be calculated uniquely from the axial displacements and the equilibrium equations. 
\end{proof}
Lemma \ref{l2} and equation (\ref{eq:a13p}) implies that by knowing the axial displacement throughout the domain in a piecewise homogeneous material, the lateral displacement may be determined up to four coefficients.  Hence, if sufficient information exists in the measurements to determine these four coefficients accurately, then the lateral displacement field may be determined accurately throughout the domain.  

In the next section, we derive a variational formulation that permits this estimation in a natural way.

\subsection{SPREME variational formulation}
%\subsection{\normalfont{\scshape{spreme}} variational formulation}
\label{ssec:VarFomu}
In practice, we shall enforce constraint (\ref{eq:f3}) approximately as a penalty, in the form of a regularization term. To determine the appropriate form of that term, we refer to Figure (\ref{fig:domain1}) and consider a \emph{continuous} $\mu$ of the form:
\begin{figure}[!ht]\centering
%  \begin{minipage}[t]{0.90\linewidth}\centering
%      \includegraphics*[viewport=58 338 490 658,scale=0.38]{figures/domainPic.pdf}
%  \end{minipage}\hfill
      \includegraphics*{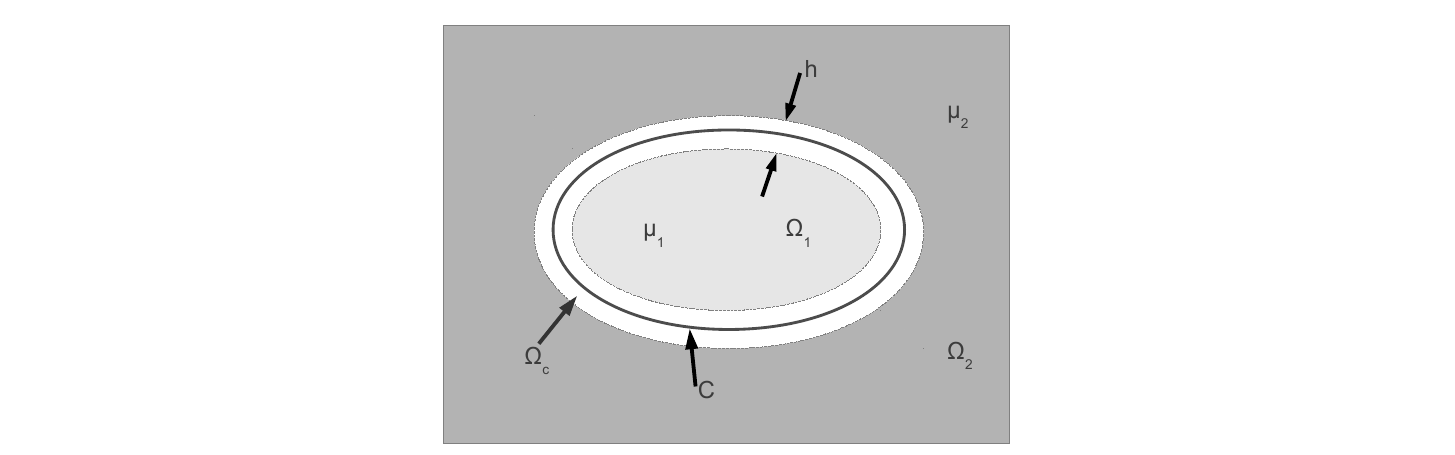}
  \caption{A piecewise continuous distribution of elastic modulus represented as the limit of a continuous distribution. 
The function $\mu(x,y)$ has the constant value $\mu_1$ for $(x,y) \in \Omega_1$, 
has the constant value $\mu_2$ for $(x,y) \in \Omega_2$, 
and is continuous and monotone in a narrow intermediate region, $\Omega_c$.}
  \label{fig:domain1}
\end{figure}
\beq
\mu = 
\begin{cases}
  \mu_1       & x\in \Omega_1 \\
  \mu_2       & x\in \Omega_2 \\
  \text{continuous monotone}      & x\in \Omega_c
\end{cases}
\eeq
where $\Omega_c$ is the set of all points within $\frac{h}{2}$ from curve C. We assume there is a constant $C_1$ so that $|\nabla \mu| \le \frac {C_1}{h}$ for all $x \in \Omega_c.$  

Suppose $\nabla \cdot (\mu\A) = 0$ everywhere in $\Omega$, then:
%\bea
%\mbox{Then} \qquad \nabla \cdot (\mu\A) &=& \mu_1 \nabla \cdot \A = 0 \longrightarrow \nabla \cdot \A = 0 \qquad x\in \Omega_1 \\
%\nabla \cdot (\mu\A) &=& \mu_2 \nabla \cdot \A = 0 \longrightarrow \nabla \cdot \A = 0 \qquad x\in \Omega_2 \\
%\nabla \cdot (\mu\A) &=& \mu \nabla \cdot \A + \A \nabla \mu = 0 \qquad x\in \Omega_c \label{eq:f19}%\\
%\nabla \cdot \A &=& O\left( \frac{1}{h}\right) \qquad x\in \Omega_c.
%\eea
\beq
%\mbox{then} \qquad 
\nabla \cdot (\mu\A) = 
\begin{cases}
  \mu_1 \nabla \cdot \A = 0      & x\in \Omega_1 \\
  \mu_2 \nabla \cdot \A = 0      & x\in \Omega_2 \\
  \mu \nabla \cdot \A + \A \nabla \mu = 0     & x\in \Omega_c
\end{cases}. \label{eq:f25}
\eeq
%Equation (\ref{eq:f25}) implies:
%\beq
%%\mbox{therefore} \qquad 
%\nabla \cdot \A =
%\begin{cases}
%  0       & x\in \Omega_1 \\
%  0       & x\in \Omega_2 \\
%  O\left(\frac{1}{h}\right)      & x\in \Omega_c
%\end{cases}.
%\eeq

In order to form the penalty term, we assume (\ref{eq:f25}) is satisfied and consider:
\bea
%\int_{\Omega} \|\nabla \cdot \A\|^{\alpha} \, d\Omega = \int_{\Omega_c} \left\|\A  \frac{\nabla \mu}{\mu}\right\|^\alpha d\Omega_c &\le&
%\int_{\Omega_c} \frac{\|\A\|^\alpha \|\nabla\mu\|^\alpha}{(\mu_{min})^\alpha} d\Omega_c
%\le \int_{\Omega_c} \left(\frac{C_2}{h}\right)^{\alpha} d\Omega_c
\int_{\Omega} |\nabla \cdot \A|^{\alpha} \, d\Omega = \int_{\Omega_c} \left|\A  \frac{\nabla \mu}{\mu}\right|^\alpha d\Omega_c &\le&
\int_{\Omega_c} \frac{|\A|^\alpha |\nabla\mu|^\alpha}{(\mu_{min})^\alpha} d\Omega_c
\le \int_{\Omega_c} \left(\frac{C_2}{h}\right)^{\alpha} d\Omega_c
 \label{eq:f15}\\
%\mbox{where:} \qquad \int_{\Omega_c} \left(\frac{C_o}{h}\right)^{\alpha} \, d\Omega_c = 
\sim h L_{(c)}\left(\frac{C_2}{h}\right)^{\alpha} &=& L_{(c)}C_2^{\alpha}h^{1-\alpha} \label{eq:f16} %\\
%\mbox{Then:} \qquad \nabla \cdot \A \ \mbox{vanishes as h $\to$ 0 for $\alpha < 1$}.
\eea
where $\mu_{min} =$ min$\{\mu_1,\mu_2\}$ is the minimum value of $\mu$ in $\Omega_c$, $C_2=(C_1 |\A|)/\mu_{min}$, $|\A|$ is the largest eigenvalue of $\A$, and $L_{(c)}$ is the perimeter of curve C; see Figure (\ref{fig:domain1}).
%In the above equations, $L_{(c)}$ is the perimeter of curve C in Figure (\ref{fig:domain1}).

We see in (\ref{eq:f16}) that for piecewise constant $\mu$ (i.e. when h $\rightarrow 0$), 
\beq
%\int_{\Omega} \|\nabla \cdot \A\|^\alpha \, d\Omega = 0 \qquad \mbox{for} \qquad 0 < \alpha < 1. \label{eq:mnc16}
\int_{\Omega} |\nabla \cdot \A|^\alpha \, d\Omega = 0 \qquad \mbox{for} \qquad 0 < \alpha < 1. \label{eq:mnc16}
\eeq
Equation (\ref{eq:mnc16}) shows that we may enforce the constraint $\nabla \cdot \A =0\quad a.e.$ by choosing to enforce the integral condition (\ref{eq:mnc16}) with $0< \alpha < 1$.

The derivative of the absolute value function is undefined at the origin, so in practice we introduce a small positive constant, $\delta$, and approximate (\ref{eq:mnc16}) as:
\beq
%\int_{\Omega} \|\nabla \cdot \A(\bep)\|^{\alpha} \, d\Omega \approx \int_{\Omega}
\int_{\Omega} |\nabla \cdot \A(\bep)|^{\alpha} \, d\Omega \approx \int_{\Omega}
\frac{\left(\nabla \cdot \A(\bep^{{k}})\right)^{2}}{\Big[\left(\nabla \cdot \A(\bep^{^{(k-1)}})\right)^{2} + \delta\Big]^n} \, d{\Omega}  \label{eq:f17}
\eeq
where:
\beq
n = 1 - \frac{\alpha}{2} 
\eeq
Note that for $0 < \alpha < 1 \Longleftrightarrow 1 > n > \frac{1}{2}$.

In Equation (\ref{eq:f17}), k denotes an iteration counter. As k approaches infinity, we expect $\bep^{^{(k-1)}}$ to approach $\bep^{{k}}$. In this limit, for $\delta = 0$, the approximation in (\ref{eq:f17}) is then exact. The $\delta$ in the denominator is a small numerical constant that prevents the denominator from being zero when $\nabla \cdot \A = 0.$

Using (\ref{eq:f17}) instead of (\ref{eq:f3}), we re-formulate the problem as:

\begin{problem}
Given $\u_{m}(x,y)$, $\x\in\Omega$ and $\bep^{{(k-1)}}(x,y)$, find $\u^k$, and $\bep^{{k}}$ that minimizes:
\beq
\pi[\u, \bep] = \pi_{_{O}} + \pi_{_{C}} + \pi_{_R} \label{eq:f18}
\eeq
where:
\beq
\pi_{_{O}} = \frac{1}{2} \int_{\Omega} (\u_{m} - \u^k) \cdot \T(\u_{m} - \u^k) \, d\Omega
\eeq
and:
\beq
\pi_{_{C}} = \frac{\beta}{2} \int_{\Omega} (\bep^k - \nabla^s \u^k)^2 \, d{\Omega} \label{eq:f19}
\eeq
and:
\beq
\pi_{_{R}} = \frac{1}{2}\int_{\Omega} \alpha(\x) \left(\nabla \cdot \A(\bep^{{k}})\right)^{2} \, d{\Omega} \label{eq:f20}
\eeq
and:
\beq
\alpha(\x) = \frac{\alpha_o}{\Big[\left(\nabla \cdot \A(\bep^{^{(k-1)}})\right)^{2} + \delta\Big]^n}.
\eeq
\end{problem}
This yields a quadratic minimization problem at each iteration. We initialize iterations using $\bep^{(0)}=0$.  
%Here, $\beta$ and $\alpha_o$ are penalty parameters, $\delta$ is a small positive constant, and $\frac{1}{2} < n < 1$.  
% As mentioned above, the main goal is to find more precise estimates of the displacement and strain fields. This goal will be achieved by minimizing 
Equation (\ref{eq:f18}) contains a displacement data matching term ($\pi_{_{O}}$), a compatibility term ($\pi_{_{C}}$), and a regularization term ($\pi_{_{R}}$).  $\pi_{_{O}}$ minimizes the mismatch between the measured displacement field and the filtered displacement field,
%calculates a filtered displacement field from the measured displacement field.
 $\pi_{_{C}}$ %calculates strains from the filtered displacements, and thus 
constrains the calculated strains to satisfy the compatibility equation. $\beta$ is a constant that is used to regulate how strongly the compatibility condition is enforced. $\pi_{_{R}}$ is used to constrain the strains to satisfy the equilibrium equations almost everywhere in the domain. $\alpha({\x})$ maybe interpreted as a spatially varying regularization constant that regulates how strongly the equilibrium equation is enforced locally. 

The way the regularization, $\pi_{_{R}}$, works can be best explained with figure (\ref{fig:domain1}). In this figure, where $\mu$ is constant (i.e. in the background ($\mu_2$) and in the inclusion ($\mu_1$)), $\nabla\cdot\A(\bep^{^{\left(k-1\right)}})$ will be small meaning that $\alpha(\x)$ will be large, so the equilibrium equation will be strongly enforced in these regions. Therefore, $u_x$ can be confidently determined there up to the homogeneous solution. Where $\mu$ is not constant (i.e. in $\Omega_c$, a thin region along the interface between the inclusion and the background), $\nabla \cdot \A(\bep^{^{\left(k-1\right)}})$ will be large, meaning that $\alpha(\x)$ will be small, so the equilibrium equation will not be strongly enforced. Thus the strains are permitted to change rapidly there. Continuity of the lateral displacements is enforced there through the $\pi_{_{C}}$ term. This ``spatially adaptive'' regularization thus allows us to enforce the equilibrium equation everywhere in the domain, except near material interfaces.

%We note that equation (\ref{eq:f18}) is not the first formulation of the problem that we tried. The first method used to formulate the problem is explained in appendix (\ref{app:1stFormu}). 

We note that if the goal is to evaluate the strain tensor only, then it is tempting to reformulate the problem without the displacement variable, $\u$. Such a formulation is presented in appendix (\ref{app:1stFormu}). There, it is shown that the strains obtained from the formulation are not compatible, and hence not physical, for some simulation experiments. 

%\subsection{Solution}
Equation (\ref{eq:f18}) is the \spreme\ functional. Our implementation minimizes $\pi[\u,\bep]$ by the finite element method (FEM) discretization. The weak form is derived first by computing the following directional derivatives:

\bea
D_{_{\u}} \pi : \v &=& \frac{d}{dc} \Bigg |_{c = 0} \pi [\u + c\v, \bep] = 0 \\ 
D_{_{\bep}} \pi : \w &=& \frac{d}{dc} \Bigg |_{c = 0} \pi [\u, \bep + c\w] = 0 \label{eq:s0} 
\eea
where  $\v$ and $\w$ are admissible variations of $\u$ and $\bep$, respectively, and $\bep = \bep^{k}$. Taking the directional derivatives yields the following system of equations:

\beq
\int_{\Omega} \T(\u - \u_{m})\v \, d\Omega - \beta \int_{\Omega} \left[\bep - \nabla^s \u \right](\nabla \v) \, d\Omega = 0 \qquad \mbox{for all } \v \label{eq:s1}
\eeq

%\beq
%\int_{\Omega} \left[\bep - \nabla^s \u \right]\w \ d\Omega + \alpha_o \int_{\Omega} \frac{\nabla \cdot \A[\w] \cdot \nabla \cdot \A[\bep]} {\Big[[\nabla \cdot \A(\bep^{^{(k-1)}})]^{2} + \delta\Big]^n} \, d{\Omega} = 0. \label{eq:s2}
%\eeq

\beq
\beta \int_{\Omega} \left[\bep - \nabla^s \u \right]\w \ d\Omega + \int_{\Omega} \alpha(\x) \{\nabla \cdot \A(\w)\} \cdot \{\nabla \cdot \A(\bep)\} \, d{\Omega} = 0 \qquad \mbox{for all } \w \label{eq:s2}
\eeq
%where:
%\beq
%\alpha(\x) = \frac{\alpha_o}{\Big[\left(\nabla \cdot \A[\bep^{^{(k-1)}}]\right)^{2} + \delta\Big]^n}. \label{eq:s3}
%\eeq

Equations (\ref{eq:s1}) and (\ref{eq:s2}) were discretized with bilinear shape functions to yield a system of linear equations. A new finite element within an in-house FEM code was created to solve the system of linear equations following standard FE methods e.g. \cite{book:tjrhughes}.

\section{Verification of computational implementation}
\label{sec:veri}
The purpose of verification is to confirm that, given data that satisfies the modeling assumptions, a computational implementation produces the correct solution that is consistent with the assumed model.  In this section, therefore, the results of two different simulation experiments conducted to test the implementation of the \spreme\ formulation will be described. For the two experiments performed, reference analytical displacement and strain fields are available, and they were used to evaluate the quality of reconstructed displacements and strains from the \spreme\ formulation. 

\subsection{Reference analytical solution}
An exact analytical solution of the elasticity equations is used to generate reference displacement data. The solution corresponds to that of an infinite elastic sheet with a perfectly bonded circular inclusion.  The sheet is subject to uniform stress at infinity. 
The problem was nondimensionalized using the region of interest size as the reference length, and the background shear modulus as the reference stress.
 The shear modulus of the inclusion and background were $\mu_{_I}=3$, and $\mu_{_B}=1$, respectively. The Poisson's ratio was taken to be $\nu = 0.5$, consistent with the incompressibility assumption. The radius of the inclusion was $a=0.25$. The solution to the elasticity equation was obtained using the methods described in \cite{honein:thesis}.

\subsection{Uniaxial tension test}
\label{ssec:uniTen}
\begin{figure}[!ht]\centering
%  \begin{minipage}[t]{0.90\linewidth}\centering
%  \hspace{2cm}
%    \includegraphics[viewport=87 284 544 678,clip,scale=0.30]{figures/blob_uniAx.pdf}
%  \end{minipage}\hfill
      \includegraphics*{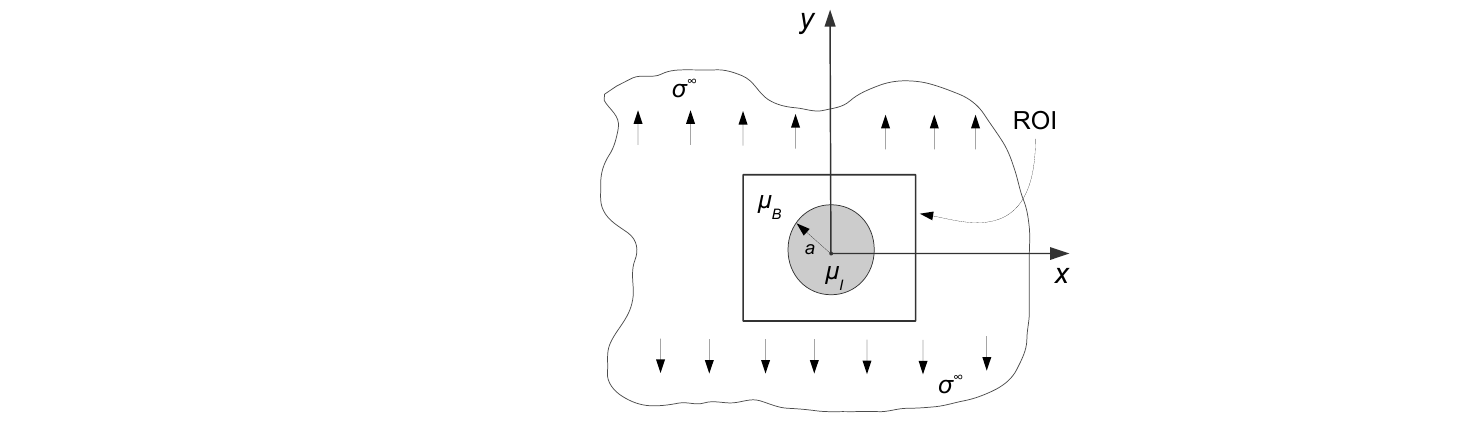}
  \caption{Uniaxial tension test: % Schematic of forward elasticity problem used for verification.  
The exact analytical solution for far-field uniaxial tension applied to an infinite sheet with a bonded 
circular inclusion is used for verification of the computational implementation.  
In the figure, ROI indicates the Region Of Interest over which displacement fields are assumed to have been ``measured".} 
  \label{fig:numEx}
\end{figure}

% ------------------------ ------------------------ ------------------------ ------------------------ ------------------------
\begin{figure}[!t]
\centering
%%\begin{center}
%\begin{tabular}{m{1mm}cc}
%% \begin{tabular}{ccc}
%%
%\rotatebox{90}{\qquad Target (noiseless)} & 
%\adjustbox{valign=m,vspace=1pt}{
%\includegraphics*[viewport=87 220 515 573,scale=0.40]{figures/ux_targ.pdf}
%} & 
%\adjustbox{valign=m,vspace=1pt}{
%\includegraphics*[viewport=87 220 515 573,scale=0.40]{figures/uy_targ.pdf} 
%}\\
%%
%\rotatebox{90}{\qquad Input (corrupted)} & 
%\adjustbox{valign=m,vspace=1pt}{
%    \includegraphics*[viewport=84 222 515 571,scale=0.40]{figures/uxIn.pdf}
%} & 
%\adjustbox{valign=m,vspace=1pt}{
%    \includegraphics*[viewport=84 222 515 571,scale=0.40]{figures/uyIn.pdf}
%}\\
%%\rotatebox{90}{Reconstructed (\spreme)} & \figi{uyDispFEMt1.pdf} & \figi{uyDispFEMt2.pdf} & \figi{uyDispFEMt3.pdf} & \figi{uyDispFEMt4.pdf} \\
%%\rotatebox{90}{Reference} & \figi{uyItit1.pdf} & \figi{uyItit2.pdf} & \figi{uyItit3.pdf} & \figi{uyItit4.pdf} \\
%       & Lateral (horizontal) displacement & Axial (vertical) displacement 
%\end{tabular}
%\end{center}
      \includegraphics*{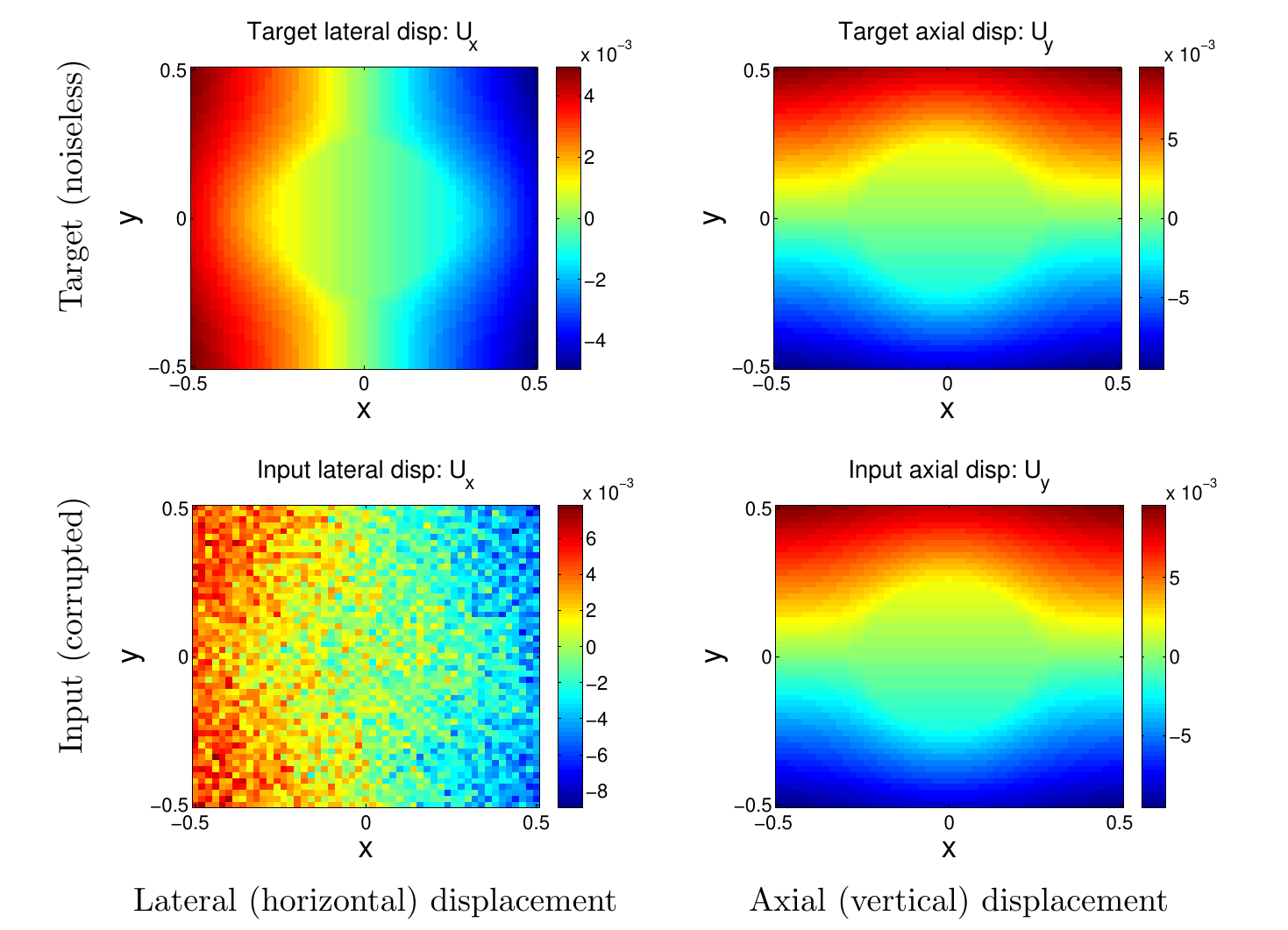}
  \caption{Uniaxial tension test.  Upper row: Exact noiseless displacement field for uniaxial tension experiment.  
These represent the target displacement fields.  Lower row:  
Noise-corrupted displacement fields used for verification. 
The noise level in the lateral displacements is similar to that seen in physically 
measured displacement fields shown in figure \ref{fig:latDisp}.}
\label{fig:targDisp}
\end{figure}
% 
% ------------------------ ------------------------ ------------------------ ------------------------ ------------------------

% ------------------------ ------------------------ ------------------------ ------------------------ ------------------------
% ++++ strain fields
\begin{figure}[!t]
\centering
%%\begin{center}
%\begin{tabular}{m{1mm}ccc}
%% \begin{tabular}{ccc}
%%
%\rotatebox{90}{\qquad Target (noiseless)} & 
%\adjustbox{valign=m,vspace=1pt}{
%    \includegraphics*[viewport=87 220 515 573,scale=0.31]{figures/targ_exx.pdf}
%} & 
%\adjustbox{valign=m,vspace=1pt}{
%    \includegraphics*[viewport=87 220 515 573,scale=0.31]{figures/targ_eyy.pdf}
%} & 
%\adjustbox{valign=m,vspace=1pt}{
%    \includegraphics*[viewport=87 220 515 573,scale=0.31]{figures/targ_exy.pdf}
%}\\
%%
%\rotatebox{90}{\qquad Original (corrupted)} & 
%\adjustbox{valign=m,vspace=1pt}{
%    \includegraphics*[viewport=87 220 515 573,scale=0.31]{figures/exxIn.pdf}
%} & 
%\adjustbox{valign=m,vspace=1pt}{
%    \includegraphics*[viewport=87 220 515 573,scale=0.31]{figures/eyyIn.pdf}
%} & 
%\adjustbox{valign=m,vspace=1pt}{
%    \includegraphics*[viewport=87 220 515 573,scale=0.31]{figures/exyIn.pdf}
%}\\
%%
%\rotatebox{90}{\qquad Recovered } & 
%\adjustbox{valign=m,vspace=1pt}{
%    \includegraphics*[viewport=87 220 515 573,scale=0.31]{figures/exxOut.pdf}
%} & 
%\adjustbox{valign=m,vspace=1pt}{
%    \includegraphics*[viewport=87 220 515 573,scale=0.31]{figures/eyyOut.pdf}
%} & 
%\adjustbox{valign=m,vspace=1pt}{
%    \includegraphics*[viewport=87 220 515 573,scale=0.31]{figures/exyOut.pdf}
%}\\
%%
%       & $\ep_{xx}$ Lateral strain & $\ep_{yy}$ Axial strain & $\ep_{xy}$ shear strain 
%%
%\end{tabular}
%\end{center}
      \includegraphics*{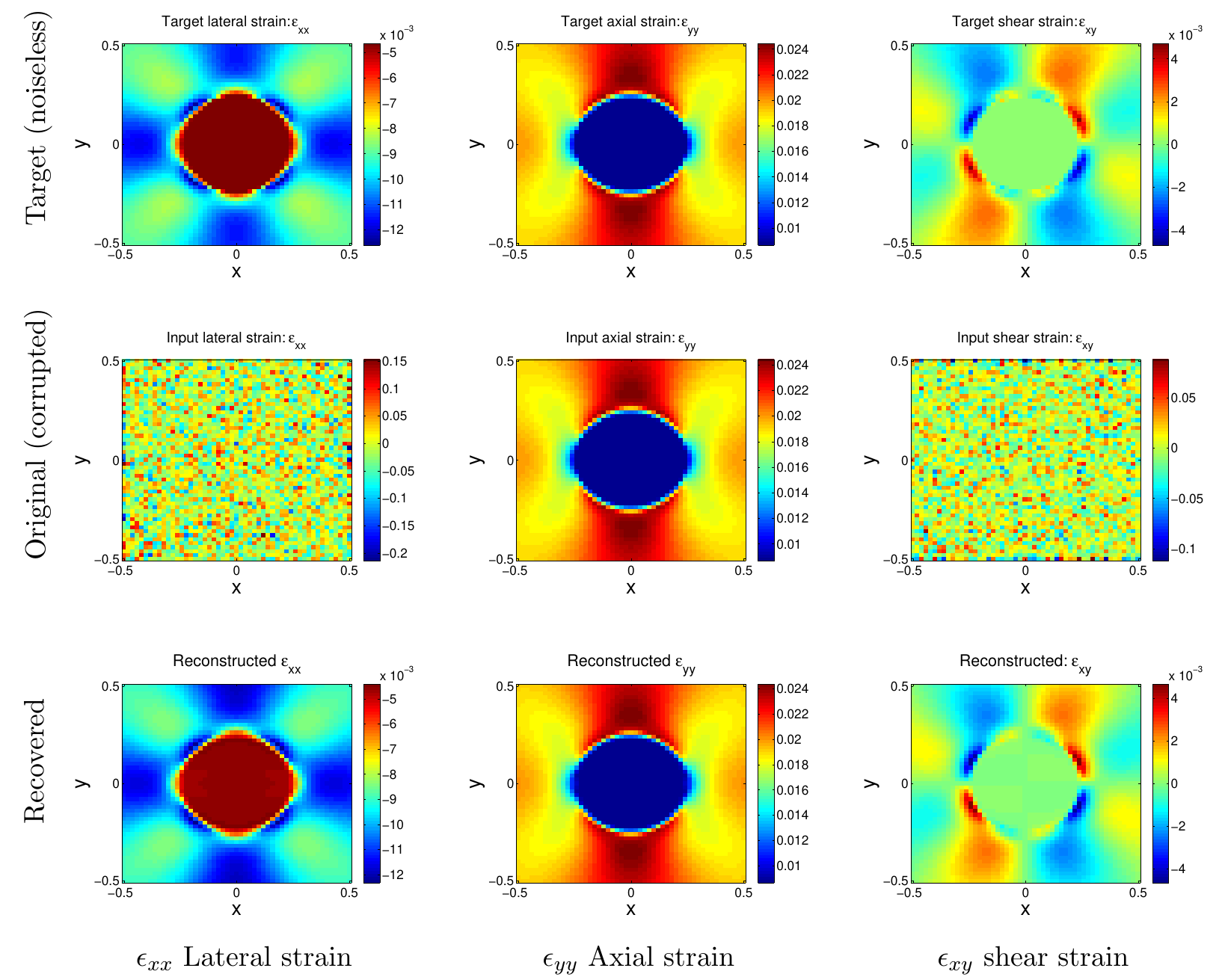}
  \caption{Uniaxial tension test results.  
Top row: Target noiseless strain fields from uniaxial tension test.  
Middle row:  Strain fields calculated from noise-corrupted input displacement fields. 
Bottom row:  Strain fields recovered from \spreme.
}
\label{fig:test1strain}
\end{figure}

% ++++   experiment with image arrays.  
% ------------------------ ------------------------ ------------------------ ------------------------ ------------------------

For the first experiment the sheet was subject to uniaxial tension in the $y$ direction as depicted in figure (\ref{fig:numEx}). 
The stress at infinity was taken to be $\sigma_{yy} = \sigma^{\infty} = 6 \times 10^{-2}$.  
The displacement field corresponding to this deformation was evaluated in a subregion around the inclusion. 
This subregion was discretized into $50 \times 50$ square elements in the $x$ and $y$ directions, 
and the axial and lateral displacements were evaluated at each of the $2601$ nodes.  
This reference displacement field is shown in figure (\ref{fig:targDisp}). 
This target displacement field was differentiated by finite differences 
to calculate the target strains shown in figure~(\ref{fig:test1strain}).  
Since the analytical expressions for the displacements are available, the target strains can be differentiated exactly 
rather than using finite differences.  
There is no analytical expression for the noise corrupted displacement fields, however, and therefore we have chosen 
to treat both noisy and reference fields in the identical manner.  
These target fields will be used as a reference to benchmark the accuracy of the fields reconstructed with \spreme. 
%\begin{figure}[!ht]
%  \begin{minipage}[t]{0.32\linewidth}\centering
%    \includegraphics*[viewport=87 220 515 573,scale=0.31]{figures/targ_eyy.pdf}
%%   \centerline{(a)}
%  \end{minipage}\hfill
%  \begin{minipage}[t]{0.32\linewidth}\centering
%    \includegraphics*[viewport=87 220 515 573,scale=0.31]{figures/targ_exx.pdf}
%%    \centerline{(b)}
%  \end{minipage}
%  \begin{minipage}[t]{0.32\linewidth}\centering
%    \includegraphics*[viewport=87 220 515 573,scale=0.31]{figures/targ_exy.pdf}
%%    \centerline{(b)}
%  \end{minipage}
%  \caption{Uniaxial tension: Target noiseless strain fields calculated by finite differences of the noiseless displacement samples.}
%  \label{fig:targStr}
%\end{figure}

Our problem assumptions include that we are given accurate axial displacements and noisy lateral displacements.  Therefore, Gaussian noise was added to the lateral displacement ($u_x$) in order to simulate ultrasound displacement measurements, and the axial displacement ($u_y$) was left unchanged. The corrupted displacement field is shown in figure (\ref{fig:targDisp}).
%\begin{figure}[!ht]
%  \begin{minipage}[t]{0.48\linewidth}\centering
%    \includegraphics*[viewport=84 222 515 571,scale=0.40]{figures/uxIn.pdf}
%%   \centerline{(a)}                                           
%  \end{minipage}\hfill
%  \begin{minipage}[t]{0.48\linewidth}\centering
%    \includegraphics*[viewport=84 222 515 571,scale=0.40]{figures/uyIn.pdf}
%%    \centerline{(b)}                                                  
%  \end{minipage}
%  \caption{Uniaxial tension test: Noise-corrupted displacement fields used for verification. 
%The noise level in the lateral displacements is similar to that seen in physically 
%measured displacement fields shown in figure \ref{fig:latDisp}. 
%} \label{fig:nDisp}
%\end{figure}

The corrupted displacement field was differentiated by finite differences to obtain the strains displayed in figure (\ref{fig:test1strain}). The $\ep_{yy}$ strain component was unchanged. The features in the other strain components, however, were completely obscured by noise. 
% \begin{figure}[!ht]
%   \begin{minipage}[t]{0.32\linewidth}\centering
%     \includegraphics*[viewport=87 220 515 573,scale=0.31]{figures/eyyIn.pdf}
% %   \centerline{(a)}
%   \end{minipage}\hfill
%   \begin{minipage}[t]{0.32\linewidth}\centering
%     \includegraphics*[viewport=87 220 515 573,scale=0.31]{figures/exxIn.pdf}
% %    \centerline{(b)}
%   \end{minipage}
%   \begin{minipage}[t]{0.32\linewidth}\centering
%     \includegraphics*[viewport=87 220 515 573,scale=0.31]{figures/exyIn.pdf}
% %    \centerline{(b)}
%   \end{minipage}
%   \caption{Uniaxial tension test: Noise-corrupted strain fields calculated from the corrupted input displacement field.}
%   \label{fig:nStr}
% \end{figure}

The corrupted displacement field was processed with \spreme\ using the following parameter values: 
$\alpha_o = 10^{-5}$, $\beta = 1$, $\delta = 10^{-8}$, $n=0.5$, $T_{xx} = 10^{-9}$, $T_{yy} = 10^{4}$. 
These parameters were chosen by trial and error from extensive computational testing.  
This same trial and error method was used to choose the best 
parameters for all the other computational experiments performed.  
Six iterations were performed to allow the strains to converge.  
The strains from \spreme\ are displayed in the bottom row of figure (\ref{fig:test1strain}). 
% The features in the lateral and shear strains can be seen clearly and they are 
% similar to the features in the target strains displayed in figure (\ref{fig:test1strain}). 
The reconstructed strains are very smooth because of the regularization on the momentum equation. 
This regularization term is very similar to a Total Variation (TV) regularization which tends to 
penalize reconstructions with high oscillations, but allows reconstructions to have 
jumps that need to be present to fit the data \cite{vogel:IP}.  
% \begin{figure}[!ht]
%   \begin{minipage}[t]{0.32\linewidth}\centering
%     \includegraphics*[viewport=87 220 515 573,scale=0.31]{figures/eyyOut.pdf}
% %   \centerline{(a)}
%   \end{minipage}\hfill
%   \begin{minipage}[t]{0.32\linewidth}\centering
%     \includegraphics*[viewport=87 220 515 573,scale=0.31]{figures/exxOut.pdf}
% %    \centerline{(b)}
%   \end{minipage}
%   \begin{minipage}[t]{0.32\linewidth}\centering
%     \includegraphics*[viewport=87 220 515 573,scale=0.31]{figures/exyOut.pdf}
% %    \centerline{(b)}
%   \end{minipage}
%   \caption{Uniaxial tension: Reconstructed strains computed by \spreme.}
%   \label{fig:recStr}
% \end{figure}

Accuracy of reconstructed strains and displacements may be seen 
qualitatively in line plots passing through the center of the inclusion region, figure (\ref{fig:line_Uaxl}).  
These plots show excellent agreement between the target strains and 
the reconstructed strains (referred to as \spreme\ Strain in the legend) in every component of strain.  
We recall that the \spreme\ formulation has independent variables for both 
displacements and strains, and therefore we also use this figure to 
evaluate consistency of the \spreme\ displacements and strains.  
To see this, we use a  finite difference approximation of the 
derivative of the reconstructed displacements to obtain 
another estimate of computed strains; these are labeled 
\spreme\ Disp in the legend, and are also in excellent 
agreement with the other quantities.  
\begin{figure}[!ht]
\centering
%  \begin{minipage}[t]{0.32\linewidth}\centering
%    \includegraphics*[viewport=87 220 515 573,scale=0.32]{figures/eyyLine.pdf}
%%   \centerline{(a)}
%  \end{minipage}\hfill
%  \begin{minipage}[t]{0.32\linewidth}\centering
%    \includegraphics*[viewport=87 220 515 573,scale=0.32]{figures/exxLine.pdf}
%%    \centerline{(b)}
%  \end{minipage}
%  \begin{minipage}[t]{0.32\linewidth}\centering
%    \includegraphics*[viewport=87 220 515 573,scale=0.32]{figures/exyLine.pdf}
%%    \centerline{(b)}
%  \end{minipage}
        \includegraphics*{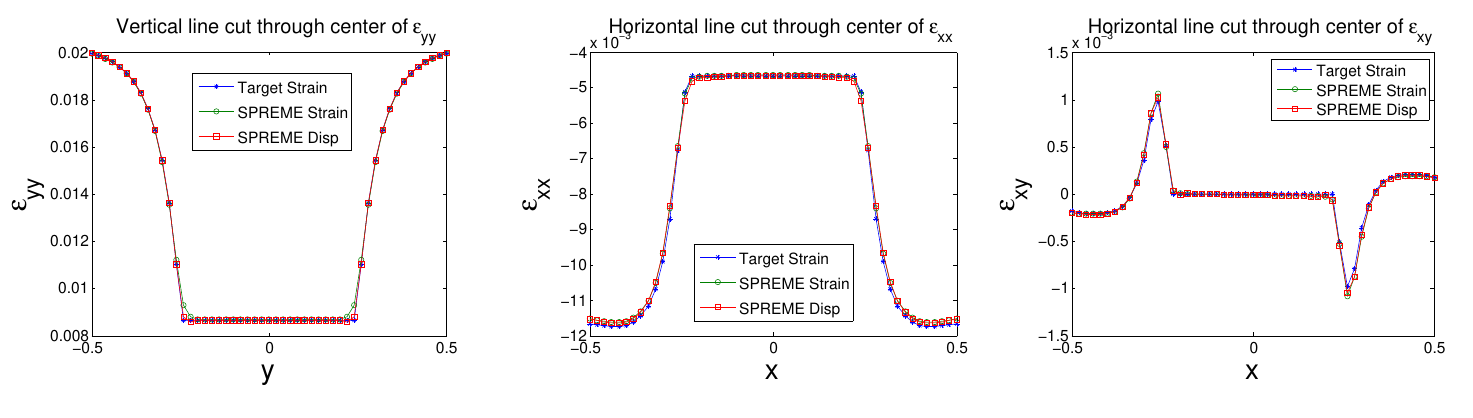}
  \caption{Uniaxial tension: Line plots through the axial, lateral, and shear strain fields. 
The \spreme\ method produces independent estimates of displacement and strain.  Thus 
each graph shows three curves.  One is the target strain.  The second is the strain output 
from \spreme\, labeled ``\spreme\ Strain." The third is the derivative of 
the output \spreme\ displacement, labeled 
``\spreme\ Disp".  Comparing the latter two provides a test for 
consistency between these fields.  
}
  \label{fig:line_Uaxl}
\end{figure}

Overall quantitative accuracy of the displacements (strains) computed with \spreme\ were evaluated for both individual components and for the total vector (tensor).  
The component error is defined as: 
\beq
\mbox{Strain error} = \frac{||\ep_{ij}^t - \ep_{ij}||}{||\ep_{ij}^t||} \approx \frac{\sqrt{\sum_{a=1}^{2601} (\ep_{ij(a)}^t - \ep_{ij(a)})^2}}{\sqrt{ \sum_{a=1}^{2601} (\ep_{ij(a)}^t)^2 }} \label{eq:eNorm_str}
\eeq

\beq
\mbox{Disp error} = \frac{||u_i^t - u_i||}{||u_i^t||} \label{eq:eNormD}
\eeq
Here $u_i^t$, and $\ep_{ij}^t$  are the target displacement and strain components, respectively, and the $i's$ and $j's$ go from 1-2 with 1 representing the $x$ direction, and 2 representing the $y$ direction.   The counter $a$ runs over all nodes in the mesh.

The norms calculated with these formulas were plotted as a function of iteration number in figure~(\ref{fig:norms}). For iteration zero, $\u$ was the displacement field corrupted by noise, and $\bep$ was initialized to zero strain.  The plots in the figure demonstrate that the errors reduce significantly with iteration number.  The exact values of the errors in the field variables after the sixth iteration are displayed in table~(\ref{tab:errVals}). From the table, we see that the error in $u_x$ (resp/.\ $\epsilon_{xx}$) reduced from about 50\% (resp. $500\%$) to about 2\%.  
\begin{figure}[!ht]
\centering
%  \begin{minipage}[t]{0.48\linewidth}\centering
%    \includegraphics*[viewport=83 231 490 563,scale=0.43]{figures/dispL2_3.pdf}
%%   \centerline{(a)}
%  \end{minipage}\hfill
%  \begin{minipage}[t]{0.48\linewidth}\centering
%    \includegraphics*[viewport=87 231 490 563,scale=0.43]{figures/strainL2_3.pdf}
%%    \centerline{(b)}
%  \end{minipage}
          \includegraphics*{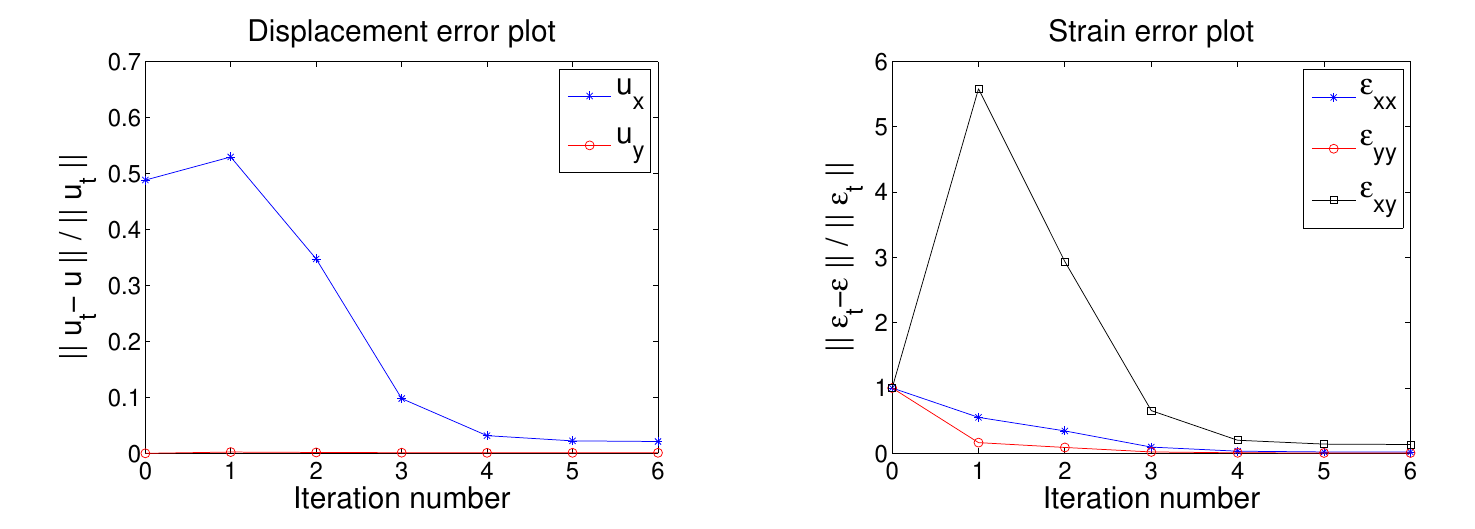}
  \caption{Uniaxial tension test: Error in each component of the field variables as a function of iteration number.}
  \label{fig:norms}
\end{figure}

\begin{figure}[!ht]
%  \begin{minipage}[t]{0.9\linewidth}\centering
%    \includegraphics*[viewport=83 231 490 563,scale=0.43]{figures/tot_err.pdf}
%%   \centerline{(a)}
%  \end{minipage}\hfill
          \includegraphics*{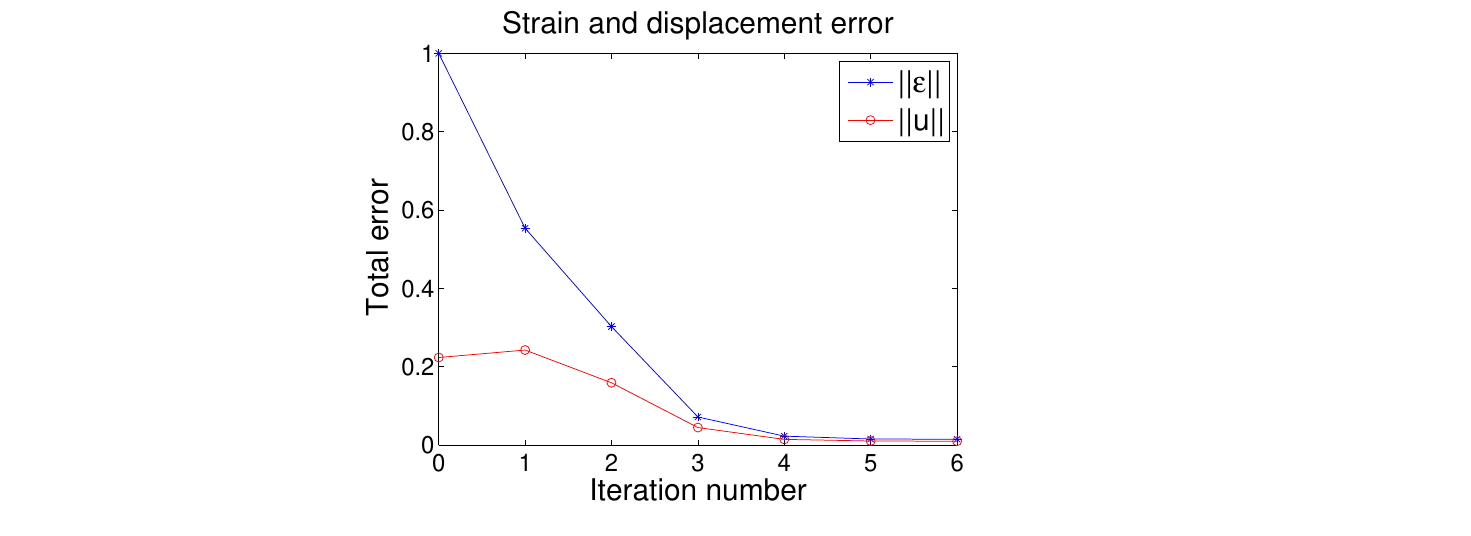}
  \caption{Uniaxial tension test: Total error in the field variables as a function of iteration number.}
  \label{fig:totNorm}
\end{figure}

The total error in strain and displacement were computed with the following formulas:
\bea
\mbox{Total strain error} &=& \sqrt{\frac{||\ep_{xx}^{err}||^2 + ||\ep_{yy}^{err}||^2 + 2||\ep_{xy}^{err}||^2}{||\ep_{xx}^t||^2 + ||\ep_{yy}^t||^2 +2||\ep_{xy}^t||^2}} \label{eq:eNorm_str_tot} 
\\
\mbox{Total disp error} &=& \sqrt{\frac{||u_x^{err}||^2 + ||u_y^{err}||^2}{||u_x^t||^2 + ||u_y^t||^2}} \label{eq:eNormD_tot}
\eea
Here $u_i^{err} = u_i-u_i^t$ and $\ep_{ij}^{err} = \ep_{ij} -\ep_{ij}^t$. A plot of these total errors as a function of iteration number is shown in figure (\ref{fig:totNorm}), which shows that the total errors decrease with iteration. 
%We note that the total errors did not always reduce with iteration number for poor choice of weights. Sometimes the errors reduce in the first few iterations, then increase slightly in later iterations and eventually plateau to a value less than the errors at the initial iterations. The exact value of the total errors is shown in table (\ref{tab:errVals}). On this table, the total displacement and strain errors are denoted as $||\u||$ and $||\bep||$ respectively. \spreme\ was able to reduce the total strain error from nearly 300\% to less than 2\%.

\begin{table}[!ht]
\begin{center}
        \begin{tabular}{ | p{3.0cm} | p{4.0cm} | c | }
        \hline
        \centering Field variable & \centering Error in input field (\%) & Error in reconstructed field (\%)\\ \hline
        \centering $u_x$ & \centering 48.8 & $2.09$ \\ \hline
        \centering $u_y$ & \centering 0 & $0.072$ \\ \hline
        \centering $||\u||$ & \centering 22.6 & $0.759$ \\ \hline        
        \centering $\ep_{xx}$ & \centering 511 & $1.78$ \\ \hline
        \centering $\ep_{yy}$ & \centering 0 & $0.526$ \\ \hline
        \centering $\ep_{xy}$ & \centering 1883 & $13.5$ \\ \hline
        \centering $||\bep||$ & \centering 277 & $1.49$ \\ \hline        
        \end{tabular}
\end{center}
\caption{Uniaxial tension test: Errors in the recovered field variables after the sixth iteration, and initial errors in the input field variables.}
\label{tab:errVals}
\end{table}

\subsection{Biaxial tension test}
For the second verification test, we consider the same body, this time under biaxial tension; c.f.\  
%A second simulation experiment, was conducted to further test the \spreme\ formulation. In this experiment, an elastic sheet with an inclusion perfectly bonded to it was subjected to biaxial tension as demonstrated in 
figure (\ref{fig:numEx1}). The stress at infinity was $\sigma_{xx} = \sigma_{yy} = \sigma^{\infty} = 6 \times 10^{-2}$. 
\begin{figure}[!ht]\centering
%  \begin{minipage}[t]{0.90\linewidth}\centering
%    \includegraphics*[viewport=87 284 544 678,scale=0.3]{figures/blob_biAx.pdf}
%  \end{minipage}\hfill
          \includegraphics*{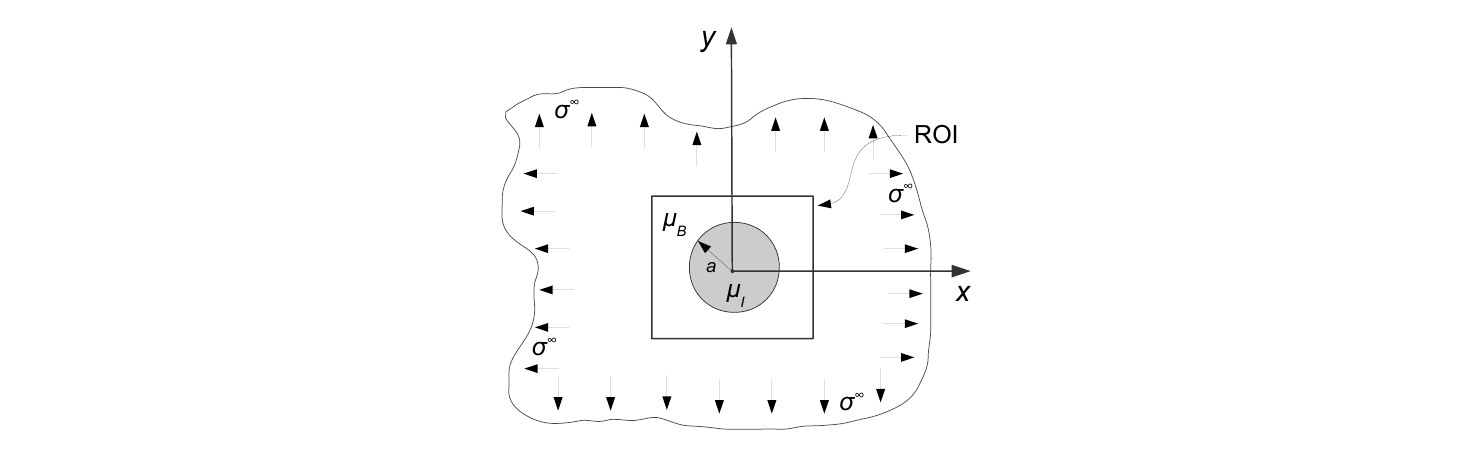}
  \caption{Biaxial tension test: % Schematic of forward elasticity problem used for verification.  
A second verification experiment uses 
the exact analytical solution for far-field biaxial tension applied to an infinite sheet with a bonded 
circular inclusion. }  
  \label{fig:numEx1}
\end{figure}
%The target displacement field was differentiated by finite differences to compute the target strains depicted on the first row in figure (\ref{fig:strain_biAx}). 
Next, Gaussian noise was added to the lateral displacement field, and reconstructions were performed using the following parameters: $\alpha = 10^{-5}$, $\beta = 10$, $\delta = 10^{-8}$, $n=0.5$, $T_{xx} = 50$, $T_{yy} = 5 \times 10^{4}$. These are approximately the same parameters as the last experiment, except with $T_{xx}$ significantly increased to accommodate the applied horizontal stress.
% This increase was found necessary, presumably due to the non trivial applied horizontal stress.

% ------------------------ ------------------------ ------------------------ ------------------------ ------------------------
% ++++ strain fields
\begin{figure}[!t]
\centering
%%\begin{center}
%\begin{tabular}{m{1mm}ccc}
%% \begin{tabular}{ccc}
%%
%\rotatebox{90}{\qquad Target (noiseless)} & 
%\adjustbox{valign=m,vspace=1pt}{
%    \includegraphics*[viewport=87 220 515 573,scale=0.31]{figures/biAx_exxIn.pdf}
%} & 
%\adjustbox{valign=m,vspace=1pt}{
%    \includegraphics*[viewport=87 220 515 573,scale=0.31]{figures/biAx_eyyIn.pdf}
%} & 
%\adjustbox{valign=m,vspace=1pt}{
%    \includegraphics*[viewport=87 220 515 573,scale=0.31]{figures/biAx_exyIn.pdf}
%}\\
%%
%\rotatebox{90}{\qquad Recovered } & 
%\adjustbox{valign=m,vspace=1pt}{
%    \includegraphics*[viewport=87 220 515 573,scale=0.31]{figures/biAx_noise_exxOut.pdf}
%} & 
%\adjustbox{valign=m,vspace=1pt}{
%    \includegraphics*[viewport=87 220 515 573,scale=0.31]{figures/biAx_noise_eyyOut.pdf}
%} & 
%\adjustbox{valign=m,vspace=1pt}{
%    \includegraphics*[viewport=87 220 515 573,scale=0.31]{figures/biAx_noise_exyOut.pdf}
%}\\
%%
%       & $\ep_{xx}$ Lateral strain & $\ep_{yy}$ Axial strain & $\ep_{xy}$ shear strain 
%%
%\end{tabular}
%\end{center}
          \includegraphics*{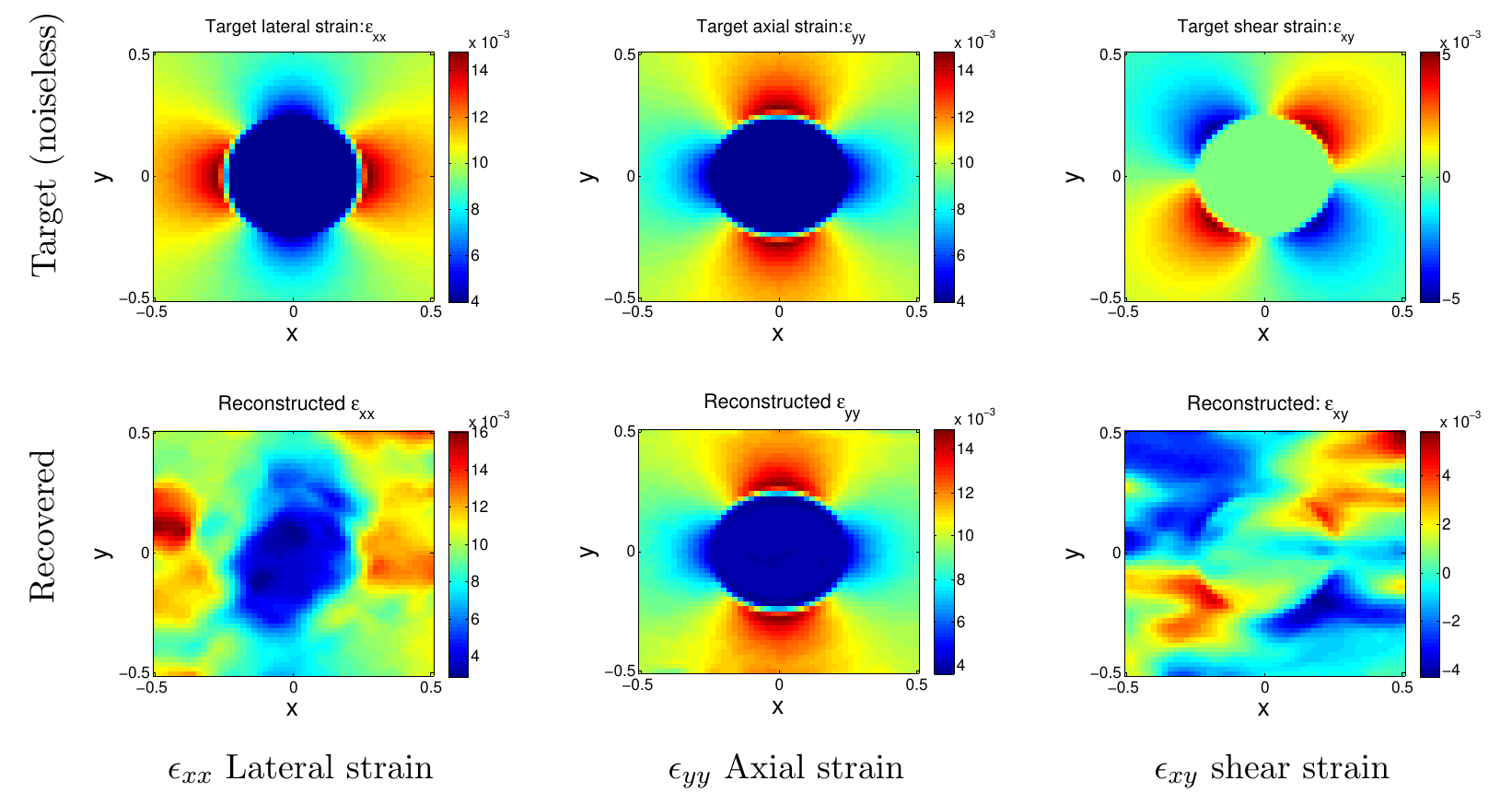}
  \caption{Biaxial tension test results. 
Upper row: Target noiseless strain fields from uniaxial tension test.  
Lower row: Strain fields recovered from \spreme.  
Relatively large weights on noisy lateral displacement increase the error in the 
recovered strains relative to those in Figure \ref{fig:test1strain}. 
}
\label{fig:test2strain}
%  \label{fig:strain_biAx}
%  \label{fig:Str_biAx_noise}
\end{figure}

% ++++   experiment with image arrays.  
% ------------------------ ------------------------ ------------------------ ------------------------ ------------------------

%\begin{figure}[!ht]
%  \begin{minipage}[t]{0.32\linewidth}\centering
%    \includegraphics*[viewport=87 220 515 573,scale=0.31]{figures/biAx_eyyIn.pdf}
%%   \centerline{(a)}
%  \end{minipage}\hfill
%  \begin{minipage}[t]{0.32\linewidth}\centering
%    \includegraphics*[viewport=87 220 515 573,scale=0.31]{figures/biAx_exxIn.pdf}
%%    \centerline{(b)}
%  \end{minipage}
%  \begin{minipage}[t]{0.32\linewidth}\centering
%    \includegraphics*[viewport=87 220 515 573,scale=0.31]{figures/biAx_exyIn.pdf}
%%    \centerline{(b)}
%  \end{minipage}
%  \caption{Target strain fields for biaxial tension experiment.}
%  \label{fig:strain_biAx}
%\end{figure}

The target and reconstructed strains at the sixth iteration are shown in figure (\ref{fig:test2strain}). 
The features in the lateral and shear strains are recovered, but we see 
significantly more error in this test case than in figure (\ref{fig:test1strain}). 
%\begin{figure}[!ht]
%  \begin{minipage}[t]{0.32\linewidth}\centering
%    \includegraphics*[viewport=87 220 515 573,scale=0.31]{figures/biAx_noise_eyyOut.pdf}
%%   \centerline{(a)}
%  \end{minipage}\hfill
%  \begin{minipage}[t]{0.32\linewidth}\centering
%    \includegraphics*[viewport=87 220 515 573,scale=0.31]{figures/biAx_noise_exxOut.pdf}
%%    \centerline{(b)}
%  \end{minipage}
%  \begin{minipage}[t]{0.32\linewidth}\centering
%    \includegraphics*[viewport=87 220 515 573,scale=0.31]{figures/biAx_noise_exyOut.pdf}
%%    \centerline{(b)}
%  \end{minipage}
%  \caption{Reconstructed strain fields for biaxial tension experiment with corrupted displacement inputs.}
%  \label{fig:Str_biAx_noise}
%\end{figure}
That error is quantified 
% The error between the reconstructed and reference field variables were computed and are displayed 
in table (\ref{tab:errVals_biAx}). 
The table shows clearly that the errors in the lateral displacement, and lateral and shear strains reduced significantly 
from the errors in the input field.   This means that the method is able to work also for this type of deformation. 
The method did not work as well as it did for the case when the domain was under uniaxial tension, however.  
This is most likely due to the need to use relatively large weights on the noisy 
lateral displacements to capture the effect of significant lateral normal stress applied at infinity.  
Thus was the effect of the noise in that displacement component amplified.  Nevertheless, 
the error in $\ep_{xx}$ reduced from about $500\%$ to about $15\%$, a very significant reduction.  

\begin{table}[!ht]
\begin{center}
        \begin{tabular}{ | p{3.0cm} | p{4.0cm} | c | }
        \hline
        \centering Field variable & \centering Error in input field (\%) & Error in reconstructed field ($\%$) \\ \hline
        \centering $u_x$ & \centering 51.2 & $6.63$ \\ \hline
        \centering $u_y$ & \centering 0 & $0.103$ \\ \hline
        \centering $||\u||$ & \centering 36.6 & $3.83$ \\ \hline        
        \centering $\ep_{xx}$ & \centering 495 & $15.4$ \\ \hline
        \centering $\ep_{yy}$ & \centering 0 & $1.04$ \\ \hline
        \centering $\ep_{xy}$ & \centering 1370 & $59.3$ \\ \hline
        \centering $||\bep||$ & \centering 422 & $13.9$ \\ \hline        
        \end{tabular}
\end{center}
\caption{Biaxial tension test: Errors in the recovered field variables after the sixth iteration, and in the input field variables for the biaxial tension experiment with noise.}
\label{tab:errVals_biAx}
\end{table}

\section{Validation}
\label{sec:vald}
In the verification section, we demonstrated that \spreme\ correctly reconstructs plane stress displacement fields. Here in the validation section, we demonstrate that the \spreme\ implementation of the plane stress model applies to physical systems of interest. The displacement data used for the test was measured within a gelatin phantom under uniaxial compression as described in \cite{jDordLNLinear}.  This displacement data was analyzed previously by an iterative elastic modulus inversion algorithm \cite{jDordLNLinear}.  The results from that study, and a smoothed version of the measured displacements, will be used as a reference to compare to the results produced by \spreme. 

\subsection{Phantom tests}

The displacement data used for the test was measured within a gelatin phantom under uniaxial compression as described in \cite{jDordLNLinear}. 
 Here, for completeness of presentation, we briefly describe the data collection procedure. 
An agar and gelatin phantom was manufactured of to have the acoustic and mechanical properties of soft tissue using methods described in \cite{pavan2010nonlinear}.  The phantom was a 100mm cube containing four spherical inclusions of diameter $10mm$. The inclusion centers are coplanar in a horizontal plane separated by $30mm$ center to center distance.  Each of the inclusion has a different material Young's modulus \cite{jDordLNLinear}.
%properties. Also the background region above the inclusion was slightly stiffer than the background region below the inclusion (the region 35mm away from the bottom) \cite{jDordLNLinear}.

\subsubsection{Ultrasound imaging and displacement estimation}
The phantom was imaged with a Siemens SONOLINE Antares (Siemens Medical Solutions USA, Inc, Malvern, PA) clinical ultrasound system, with a linear ultrasound transducer array (Siemens VFX9-4) which was pulsed at 8.89MHz \cite{jDordLNLinear}. 
%This system was used to image a 2D plane passing through each of the inclusions individually. Four 2D planes (referred to as targets 1 - 4) corresponding to each inclusion were imaged this way. Each 2D plane had a height of about 46mm and a width of 28mm. 
A 15cm $\times$ 15cm compression plate, much larger than the phantom surface, was attached to the ultrasound transducer to help generate approximately uniaxial deformation of the phantom. The phantom was first imaged before any deformation was applied, and it was imaged after every 0.5\% strain up until a total strain of 20\% with respect to the phantom's height \cite{jDordLNLinear}. 
During the imaging, Radio Frequency (RF) data, representing the spatial distribution of the backscattered pressure field,
was recorded. A modified block matching algorithm was used to estimate the displacement field from the RF data \cite{jiang2011fast}.   Because the models used in this paper depend on the small strain assumption, we use only the measured deformation fields corresponding to small strain, specifically, 
we choose $1.5\%$ overall strain.

\subsubsection{Reference results}
\label{sssec:refRes}
We compare our reconstructed displacement fields to two reference fields.  

The first reference displacement is simply the 
spatially smoothed version of the measured lateral displacement field. This smoothing is achieved by performing a local spatial average with corrections at the boundaries.  A smoothed displacement field reveals its main features, but, of course, obscures detail.

A second reference displacement field is computed as the result of a second inverse problem 
to reconstruct the linear elastic shear modulus, as described in \cite{goenezen2012linear}.
To that end, given a current guess of the modulus field, a forward elasticity problem is solved driven by assumed boundary conditions.  The modulus field is updated to minimize the difference between 
measured and predicted displacements.  Only axial displacements are used in the minimization.

%The reference results are obtained from an iterative inversion algorithm that can be used to recover the linear and non-linear material properties of the phantom. In this study, we are only interested in estimating filtered displacements and strains corresponding to small deformations. We therefore focus on using the iterative algorithm to recover the linear elastic parameters. The linear elastic parameter that the algorithm reconstructs is the shear modulus distribution \cite{jDordLNLinear}. To do this reconstruction, the algorithm solves an optimization problem by starting with an initial guess for the shear modulus distribution, and boundary conditions defining the applied deformation. It then solves a forward elasticity problem to determine a predicted displacement field from the modulus and boundary conditions. The algorithm then compares the predicted displacement field to the measurements, and if the mismatch is too great, 
%the shear modulus is updated to produce a predicted displacement field that is closer to the measurements. The forward solve and modulus update are performed iteratively until a modulus field is found that produces a predicted displacement field that is as close as possible to the measurements. A detailed description of how the algorithm works is given in \cite{goenezen2011solution}. The predicted displacement field produced at the final iteration of this algorithm serves as one of the reference result for the current study.
%

\subsubsection{Phantom results and discussion}
\label{sssec:phantResDis}
\begin{figure}[!t]
\centering
%%\begin{center}
%\begin{tabular}{m{0.5mm}cccc}
%\rotatebox{90}{Input (measured)} & \figi{uyDispMeast1.pdf} & \figi{uyDispMeast2.pdf} & \figi{uyDispMeast3.pdf} & \figi{uyDispMeast4.pdf} \\
%\rotatebox{90}{Reconstructed (\spreme)} & \figi{uyDispFEMt1.pdf} & \figi{uyDispFEMt2.pdf} & \figi{uyDispFEMt3.pdf} & \figi{uyDispFEMt4.pdf} \\
%\rotatebox{90}{Reference} & \figi{uyItit1.pdf} & \figi{uyItit2.pdf} & \figi{uyItit3.pdf} & \figi{uyItit4.pdf} \\
%       & Target 1 & Target 2 & Target 3 & Target 4
%\end{tabular}
%\end{center}
          \includegraphics*{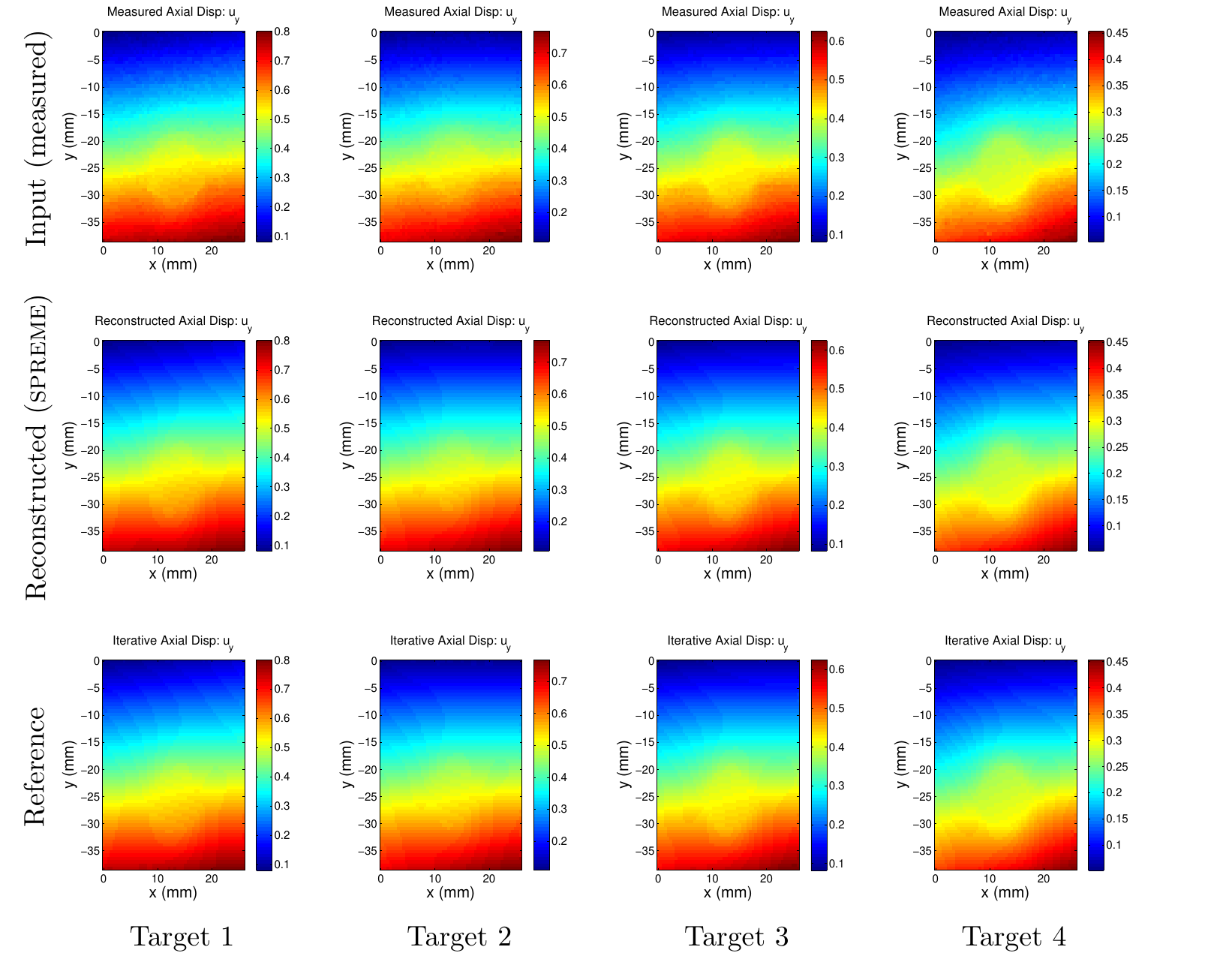}
  \caption{Axial displacement fields. First row: measured displacement field, second row: displacements reconstructed by \spreme,\ third row: reference displacements from an iterative inversion algorithm.}
  \label{fig:axDisp}
\end{figure}

%The input data for our study was measured displacement fields from RF ultrasound images consisting of 300$\times$231 nodes. Forty different displacement fields corresponding to each 0.5\% strain increment between 0\% and 20\% were measured. For the experiments performed in this section, we choose to use the displacement corresponding 1.5\% strain because our modeling assumptions are valid only for small deformations, and also because the reference results were obtained from the measured displacements corresponding to 1.5\% strain \cite{jDordLNLinear}. The displacements were measured relative to the top of the phantom where the transducer was placed so the displacements at the top are approximately zero, and they increase with depth in the phantom until they reach a maximum at the bottom. The origin of the coordinate system is also chosen to be at the top left corner of the phantom. The positive $y$ axis points upward, and the positive $x$ axis points to the right. The axial direction is defined as the direction that sound propagates, and is aligned with the y-axis. The lateral direction is the perpendicular direction in the plane, i.e. the $x$ direction. 
\begin{figure}[!t]
\centering
%%\begin{center}
%\begin{tabular}{m{0.5mm}cccc}
%\rotatebox{90}{Input (measured)} & \figi{uxDispMeast1.pdf} & \figi{uxDispMeast2.pdf} & \figi{uxDispMeast3.pdf} & \figi{uxDispMeast4.pdf} \\
%\rotatebox{90}{Smoothed input} & \figi{uxmS_t1.pdf} & \figi{uxmS_t2.pdf} & \figi{uxmS_t3.pdf} & \figi{uxmS_t4.pdf} \\
%\rotatebox{90}{Reconstructed (\spreme)} & \figi{uxDispFEMt1.pdf} & \figi{uxDispFEMt2.pdf} & \figi{uxDispFEMt3.pdf} & \figi{uxDispFEMt4.pdf} \\
%\rotatebox{90}{Reference} & \figi{uxItit1.pdf} & \figi{uxItit2.pdf} & \figi{uxItit3.pdf} & \figi{uxItit4.pdf} \\
%       & Target 1 & Target 2 & Target 3 & Target 4
%\end{tabular}
%\end{center}
          \includegraphics*{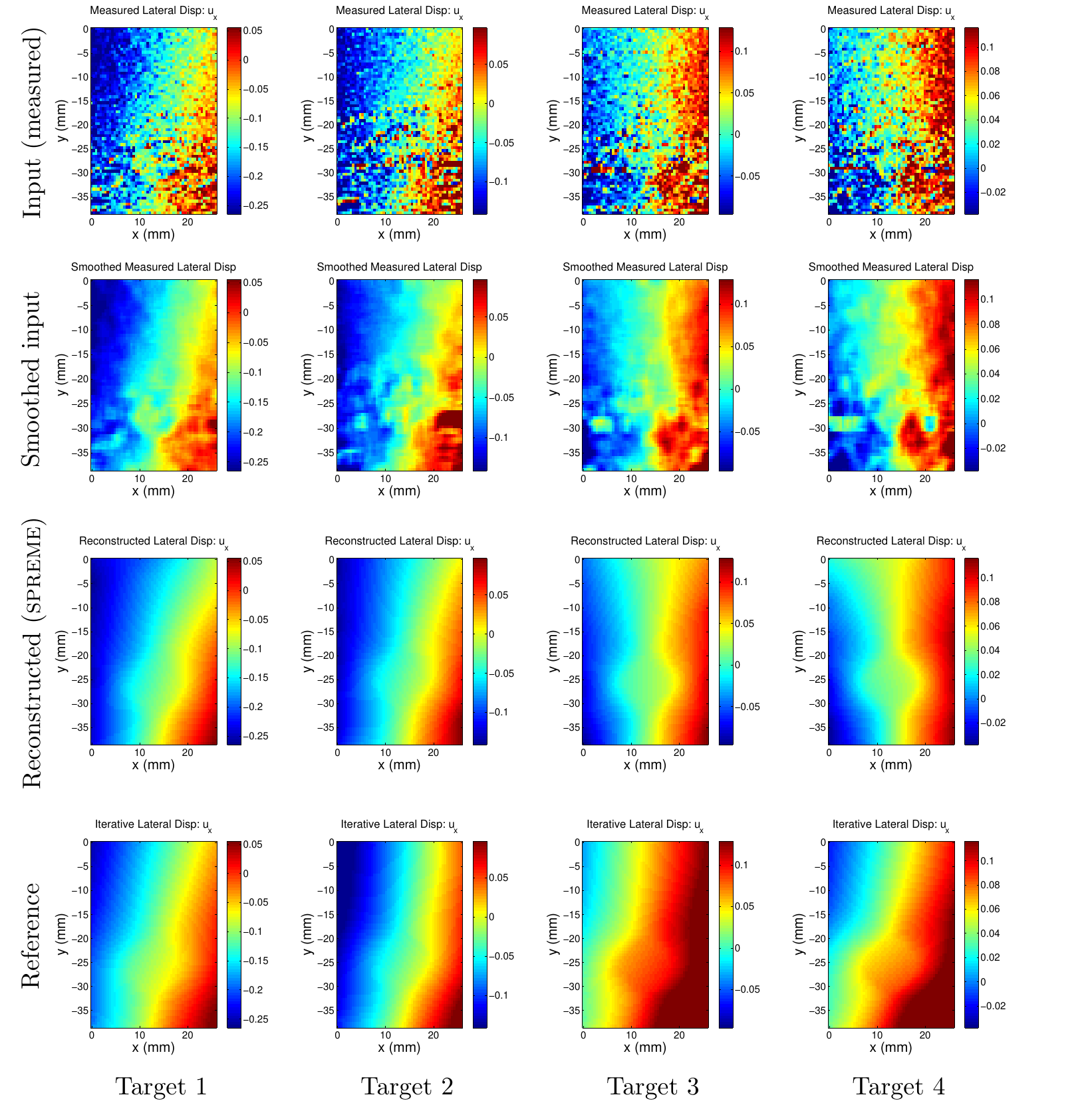}
  \caption{Lateral displacement fields. First row: measured displacements; second row: a smoothed version of the measured displacements; 
third row: displacements reconstructed by \spreme;\ fourth row: reference results from an iterative inversion algorithm. 
We note general agreement in gross features between all rows for targets 1 and 2. For targets 3 and 4, 
there is disagreement between the reference and the other fields, which we attribute to inaccuracies in 
assumed boundary conditions used to create the reference solution.   
}
  \label{fig:latDisp}
\end{figure}
 
The displacement for each target was downsampled to a 63 $\times$ 54 grid, 
which is the same grid that was used to generate the reference results \cite{jDordLNLinear}. 
%so as to allow us to solve a smaller problem. 
Images of the measured, smoothed, reconstructed, and reference displacements are shown in figures (\ref{fig:axDisp}) and (\ref{fig:latDisp}).
%The measured displacements are noisy, and this noise can sometimes obscure features in the images. In order to prevent this from happening, the range of values on the colorbar of measurements were set to be the same as the range of values on the colorbars of the filtered displacements (from \spreme). The range of values on the colorbar of the reference displacements were also set to be the same as that of the filtered displacements. 
%The measured displacement field for each target was processed with \spreme. 
The parameters used for \spreme\ processing were: $T_{xx} = 10^{-5},\ T_{yy} = 1,\ \alpha = 10^{-3},\ \beta = 10,\ \delta = 10^{-8},\ n = 0.5$.  Eleven iterations were performed for each target, though it was observed that the strains typically converged by the 5$^{\mbox{th}}$ iteration.

We note that the reconstructed axial displacements match the measured and reference axial displacements very closely. 
% The inclusion can be seen in all three sets of images, and the location and size of the inclusion seems to be the same in all the images. 
The reconstructed lateral displacements, however, were less similar to the reference displacements. This difference is most noticeable in targets 3 and 4. It is interesting to note that the general features in the \spreme\ displacement images for both these targets are more similar to the smoothed measurements than are the general features in the reference displacement images.
This indicates that the reference lateral displacements are in error, and this is likely due to errors in the assumed boundary conditions used to compute the reference results. 
\footnote{We recall that the iterative inversion code performs a forward elasticity solve at every iteration.  This solve requires assumed displacement and traction boundary conditions. The lateral direction is assumed to be traction free. The measured axial displacements were fixed as boundary conditions on all four edges of the phantom.}

\begin{figure}[!ht]
\centering
%%\begin{center}
%\begin{tabular}{m{0.5mm}cccc}
%\rotatebox{90}{Measured} & \figii{eyyMATt1.pdf} & \figii{eyyMATt2.pdf} & \figii{eyyMATt3.pdf} & \figii{eyyMATt4.pdf} \\
%\rotatebox{90}{Reconstructed (\spreme)} & \figii{eyyFEMt1.pdf} & \figii{eyyFEMt2.pdf} & \figii{eyyFEMt3.pdf} & \figii{eyyFEMt4.pdf} \\
%\rotatebox{90}{Reference} & \figii{eyyItit1.pdf} & \figii{eyyItit2.pdf} & \figii{eyyItit3.pdf} & \figii{eyyItit4.pdf} \\
%       & Target 1 & Target 2 & Target 3 & Target 4
%\end{tabular}
%\end{center}
          \includegraphics*{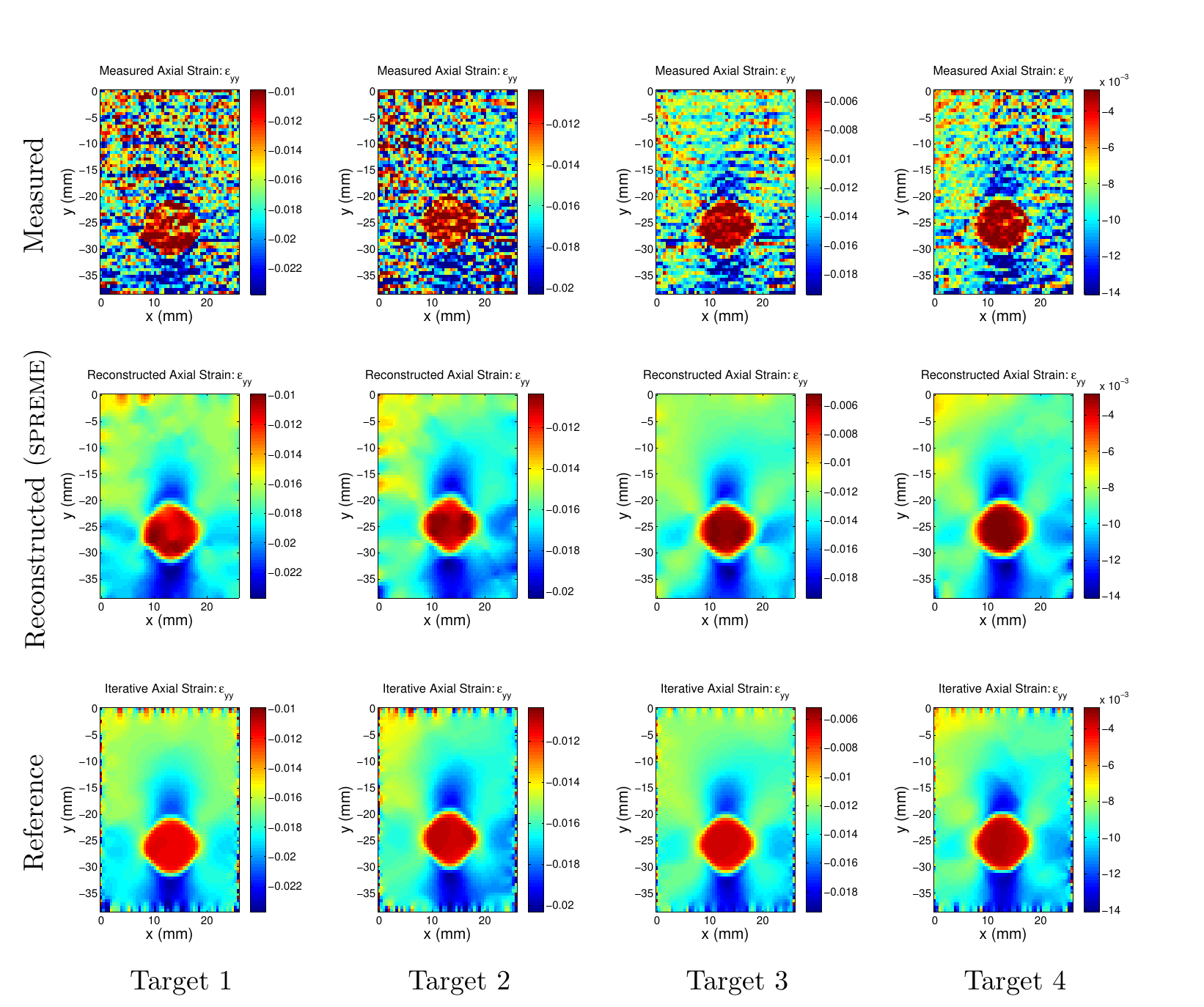}
  \caption{Axial strain fields. First row: measured strains; second row: strains reconstructed by \spreme;\ 
third row: reference results from an iterative inversion algorithm.
We note qualitative and quantitative agreement between the rows; we note also the 
presence of boundary artifacts in the reference fields that 
are not present in the \spreme\ fields.  
}
  \label{fig:axStrsP}
\end{figure}

\begin{figure}[!ht]
\centering
%%\begin{center}
%\begin{tabular}{m{0.5mm}cccc}
%\rotatebox{90}{Measured} & \figii{exxMATt1.pdf} & \figii{exxMATt2.pdf} & \figii{exxMATt3.pdf} & \figii{exxMATt4.pdf} \\
%\rotatebox{90}{Reconstructed (\spreme)} & \figii{exxFEMt1.pdf} & \figii{exxFEMt2.pdf} & \figii{exxFEMt3.pdf} & \figii{exxFEMt4.pdf} \\
%\rotatebox{90}{Reference} & \figii{exxItit1.pdf} & \figii{exxItit2.pdf} & \figii{exxItit3.pdf} & \figii{exxItit4.pdf} \\
%       & Target 1 & Target 2 & Target 3 & Target 4
%\end{tabular}
%\end{center}
          \includegraphics*{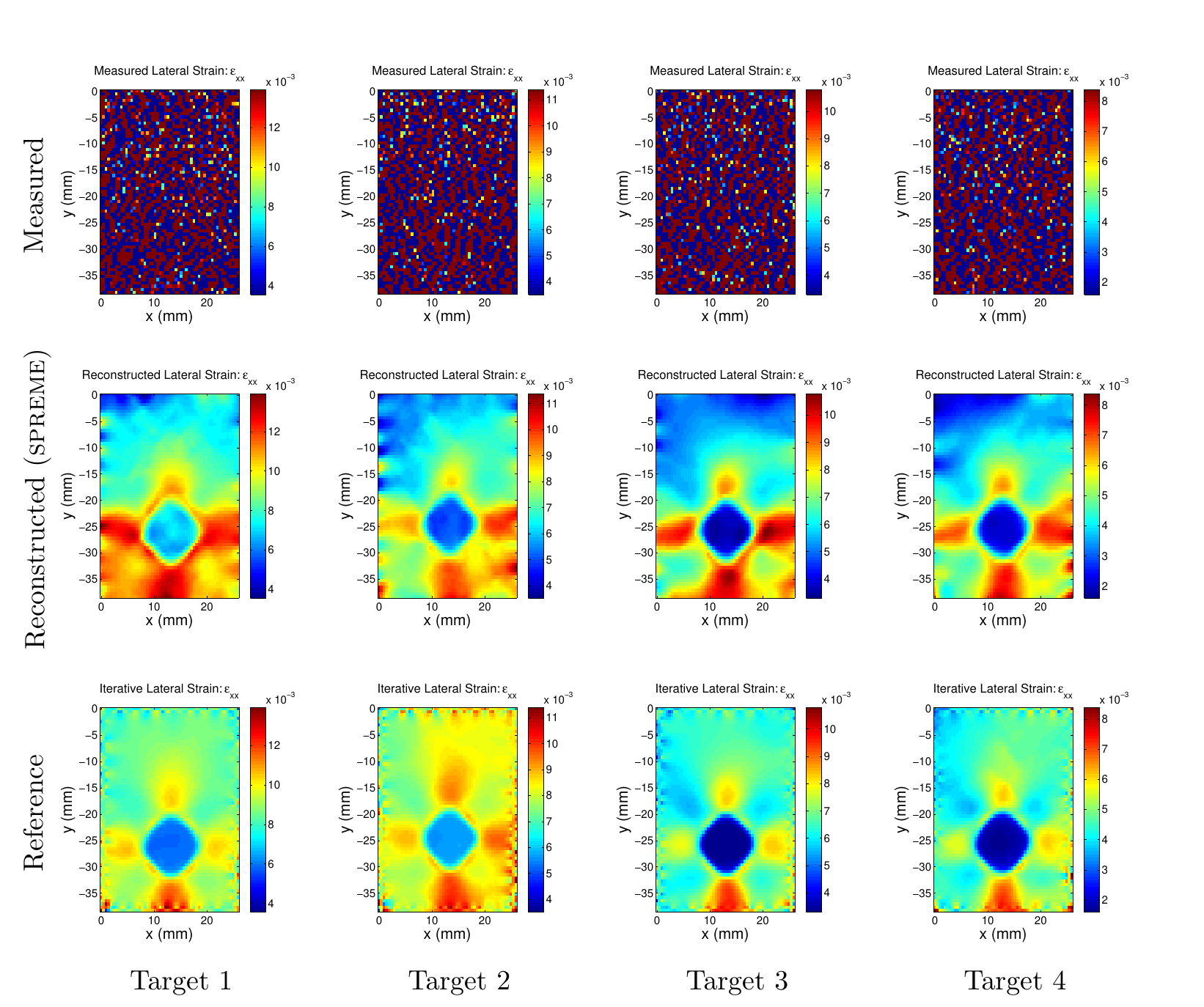}
  \caption{Lateral strain fields. First row: measured strains; second row: strains reconstructed by \spreme;\ 
third row: reference results from an iterative inversion algorithm.
We note general agreement of the main features between the \spreme\ and reference results; 
boundary noise present in reference results are not apparent in \spreme\ results;  
\spreme\ fields show higher dynamic range than 
reference fields.  
}
  \label{fig:latStrsP}
\end{figure}

Axial strain images ($\epsilon_{yy}$) 
for each target are shown in figure  (\ref{fig:axStrsP}), 
while 
the lateral strain fields ($\epsilon_{xx}$)  are in figure (\ref{fig:latStrsP}).
The measured (resp.\ reference) strains were computed by differentiating the 
measured (resp.\ reference) displacement fields using a finite difference approximation. 
Three observations may be made.  First, except for the lateral measured strain, 
all strain fields show clearly the presence of a stiff inclusion of the same size and shape, 
and with similar strain contrasts.  
The high noise level in the measured lateral displacement fields, 
however, is magnified through differentiation to a point where it 
obscures all features that might be present in the data. 
Second, we note boundary 
artifacts in the reference strain fields;  these are caused by the assumed boundary conditions in 
the reference reconstruction method.  The \spreme\ strain fields are relatively free of such artifacts.  Third, strain contrast or dynamic range is slightly (resp.\ significantly) higher in the \spreme\ axial (resp.\ lateral) strain reconstructions than in the reference fields.  This is perhaps due to deleterious effects of regularization used in the reference method, which tends to diminish modulus, and thus reconstructed strain, contrast.  Similar observations may be made when comparing shear strains, figure (\ref{fig:shearStrs}). 

\begin{figure}[!ht]
\centering
%%\begin{center}
%\begin{tabular}{m{0.5mm}cccc}
%\rotatebox{90}{Measured} & \figii{exyMATt1.pdf} & \figii{exyMATt2.pdf} & \figii{exyMATt3.pdf} & \figii{exyMATt4.pdf} \\
%\rotatebox{90}{Reconstructed (\spreme)} & \figii{exyFEMt1.pdf} & \figii{exyFEMt2.pdf} & \figii{exyFEMt3.pdf} & \figii{exyFEMt4.pdf} \\
%\rotatebox{90}{Reference} & \figii{exyItit1.pdf} & \figii{exyItit2.pdf} & \figii{exyItit3.pdf} & \figii{exyItit4.pdf} \\
%       & Target 1 & Target 2 & Target 3 & Target 4
%\end{tabular}
%\end{center}
          \includegraphics*{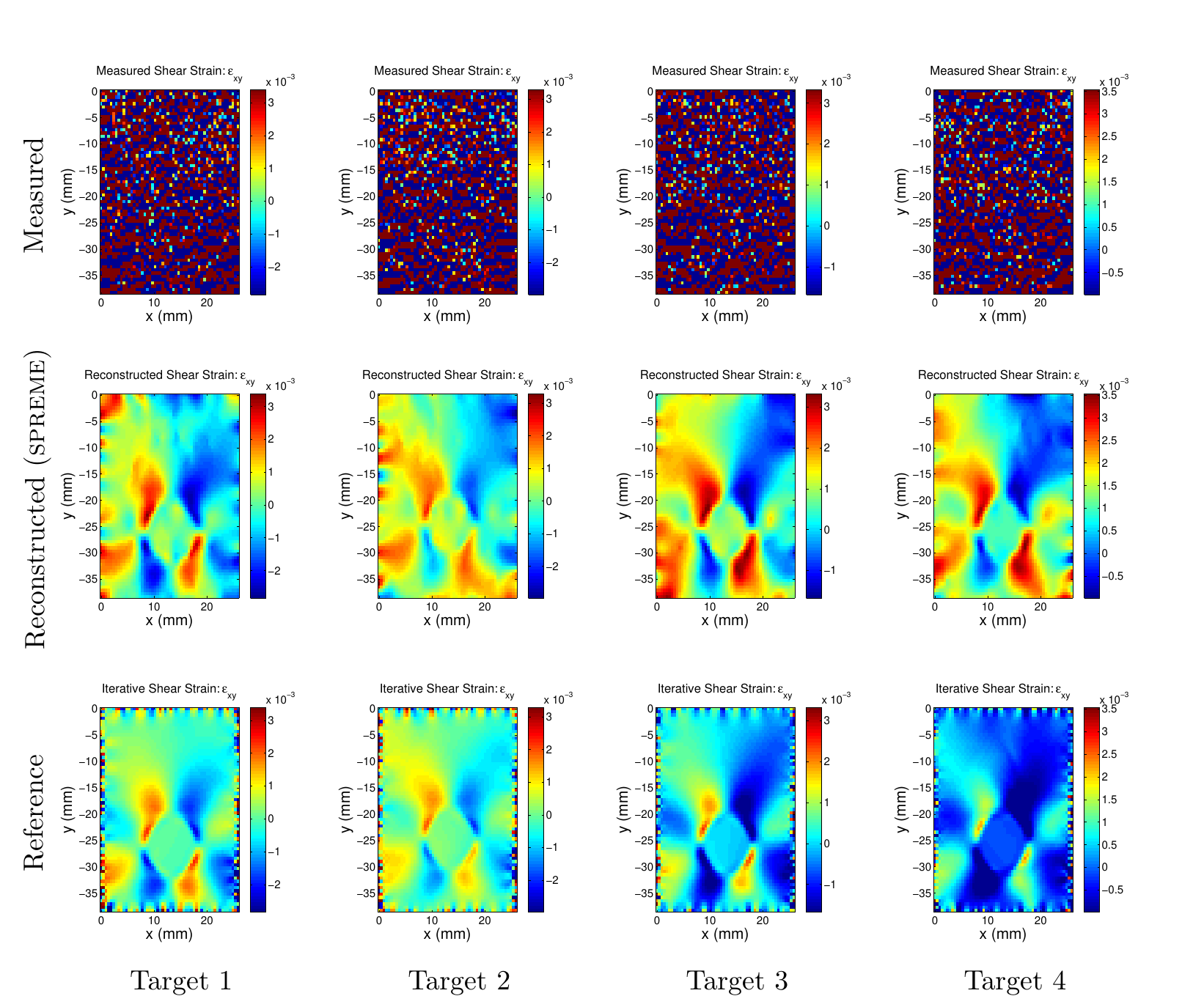}
  \caption{Shear strain fields. First row: measured strains; second row: strains reconstructed by \spreme;\ 
third row: reference results from an iterative inversion algorithm.
We note general agreement of the main features between the \spreme\ and reference results; 
boundary noise present in reference results are not apparent in \spreme\ results;  
\spreme\ fields show higher dynamic range than reference fields.  
}
  \label{fig:shearStrs}
\end{figure}
\

\section{Application to in-vivo image data}
\label{sec:app}
In this section, we test \spreme\ with in-vivo displacement data measured from patients with breast masses. The main goal of this test is to evaluate whether \spreme\ is sufficiently robust to provide useful results with clinical data, which is the motivating application. 
The in-vivo displacement data used for the tests has been processed with an iterative inversion code in a different study described in \cite{goenezen2012linear}. The results from that study will be used as a reference to benchmark the results produced by \spreme. 
%
%\subsubsection{In-vivo tests}
%
%We processed the same data that is presented in \cite{goenezen2012linear}.
%Briefly, the authors received de-identified displacement data from
%our colleagues at the University of Wisconsin.
%The original ultrasound data resides in a database containing more than 450 measurements.
%From this database, 
These results contain displacement fields from $5$ biopsy proven fibroadenomas
and $5$ biopsy proven invasive ductal carcinomas, which 
represent the most common forms of benign and malignant breast tumors.
For the present 
purposes of demonstration, in this section we present results from one of each;
the reconstructed displacement and strains for the other eight cases are shown in appendix~(\ref{app:cDataImgs}).  For all the cases treated, 
we work on the same mesh as used in  \cite{goenezen2012linear}, 
and focus on displacement fields 
corresponding approximately to roughly 1\% overall strain in order to stay in the linear range.  
The input measured displacement fields used are shown in Figure (\ref{fig:measDisp_c}). 

%
%
% \subsubsection{Reference results}
%
%This algorithm is the same algorithm used to obtain the reference phantom results,
%and it is described in the validation section.
%A detailed description of the data and its processing by the iterative inversion
%algorithm is given in \cite{goenezen2012linear}.

\subsection{In-vivo results and discussion}
\begin{figure}[!ht]
\centering
%  \begin{minipage}[t]{0.48\linewidth}\centering
%    \includegraphics*[viewport=80 222 528 564,scale=0.40]{figures/fa3_uxm.pdf}
%%   \centerline{(a)}                                              
%  \end{minipage}\hfill
%  \begin{minipage}[t]{0.48\linewidth}\centering
%    \includegraphics*[viewport=80 222 528 564,scale=0.40]{figures/fa3_uym.pdf}
%%    \centerline{Target 1}
%  \end{minipage}\\
%  \centerline{FA 3: measured displacements}\\
%
%  \begin{minipage}[t]{0.48\linewidth}\centering
%    \includegraphics*[viewport=80 222 528 564,scale=0.40]{figures/idc2_uxm.pdf}
%%   \centerline{(a)}                                              
%  \end{minipage}\hfill
%  \begin{minipage}[t]{0.48\linewidth}\centering
%    \includegraphics*[viewport=80 222 528 564,scale=0.40]{figures/idc2_uym.pdf}
%%    \centerline{Target 2}
%  \end{minipage}\hfill
%  \centerline{IDC 2: measured displacements}
          \includegraphics*{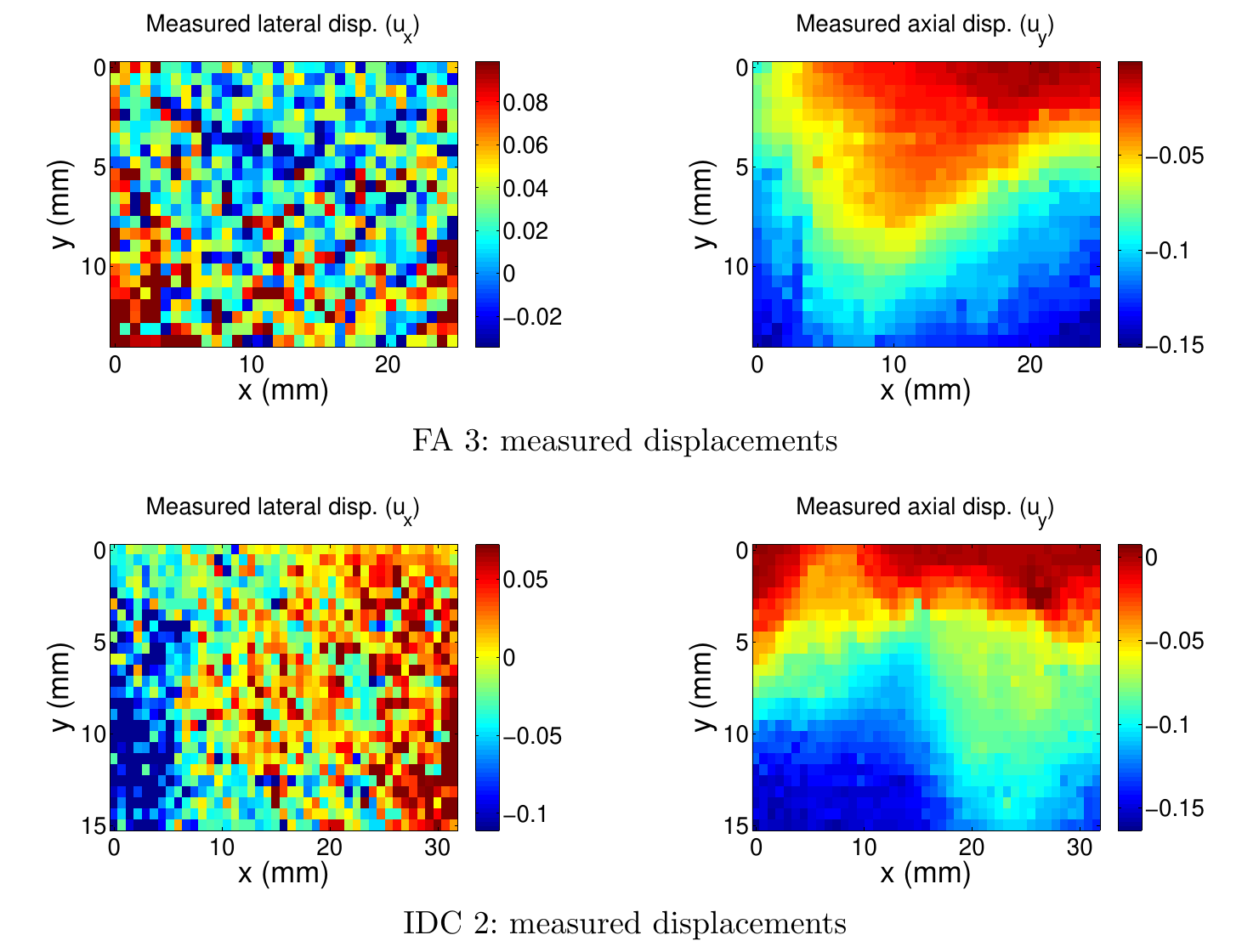}
  \caption{Displacement fields measured in-vivo in the breast as described in \cite{goenezen2012linear}. 
Upper row:   Fibroadenoma, a benign breast tumor (FA 3) 
Lower row:   Invasive Ductal Carcinoma, a malignant breast tumor (IDC 2). 
}
  \label{fig:measDisp_c}
\end{figure}

%For this study, displacements corresponding approximately to 1\% deformation were used because the study conducted with the iterative inversion code used the same displacement data, and also because \spreme\ is currently only designed to work with linear theory \cite{goenezen2012linear}. The displacements were downsampled by 4 in each direction so as to facilitate comparison between the \spreme\ results and the reference.  We will only be analyzing one fibroadenoma, and one invasive ductal carcinoma case in this section. The reconstructed displacement and strains for the other eight cases are shown in appendix~(\ref{app:cDataImgs}). 

\begin{figure}[!ht]
\centering
%\begin{tabular*}{1.0\textwidth}{@{\extracolsep{\fill}}m{0.5mm}ccc}
%\rotatebox{90}{Lateral disp.} & \figiii{fa3_uxmS.pdf} & \figiii{fa3_uxr.pdf} & \figiii{fa3_uxIti.pdf} \\
%\rotatebox{90}{Axial disp.} & \figiii{fa3_uym.pdf} & \figiii{fa3_uyr.pdf} & \figiii{fa3_uyIti.pdf} \\
%\rotatebox{90}{Axial strain} & \figiii{fa3_eyym.pdf} & \figiii{fa3_eyyr.pdf} & \figiii{fa3_eyyIti.pdf} \\
%\rotatebox{90}{Lateral strain} & \figiii{fa3_exxm.pdf} & \figiii{fa3_exxr.pdf} & \figiii{fa3_exxIti.pdf} \\
%       & Measured & Reconstructed (\spreme) & Reference
%\end{tabular*}
          \includegraphics*{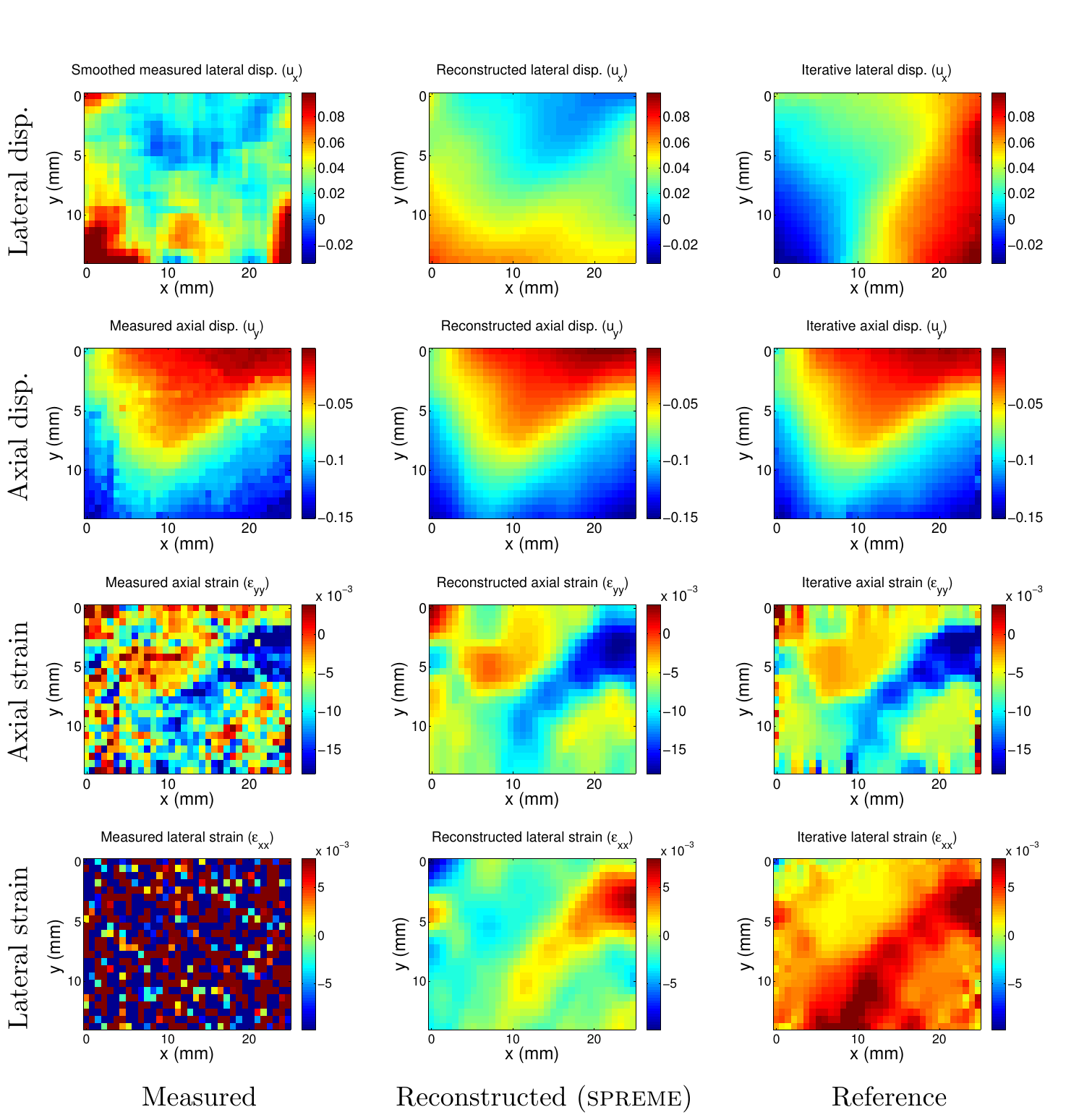}
  \caption{
Displacement and strain images for fibroadenoma 3.  We note that the reconstructed \spreme\ lateral displacements reproduce 
the gross features seen in the smoothed measurements, while the reference method does not.  
}
  \label{fig:fa3Plots}
\end{figure}

The parameters used for \spreme\ processing of the clinical data were:  $T_{xx} = 10^{-3},\ T_{yy} = 1,\ \alpha = 5 \times 10^{-4},\ \beta = 10,\ \delta = 10^{-8},\ n = 0.5$. 
The reconstructed displacement and strain fields obtained after the eleventh iteration are shown in figures (\ref{fig:fa3Plots}) and (\ref{fig:idc2Plots}), along with the reference fields.  
As with the validation study, we compare to two reference fields:  
the smoothed measured lateral displacement fields, 
and the results obtained by an
iterative inversion algorithm published in \cite{goenezen2012linear}. 
Generally speaking, we see excellent agreement between all three fields in the axial displacement and strain components.  

In the lateral displacement and strain fields, however, there is less 
consistency.   The \spreme\ lateral displacement fields tend to reproduce the gross features seen in the smoothed lateral displacement measurements;  the reference results, however, sometimes agree (c.f.\ figure (\ref{fig:idc2Plots})), and sometimes disagree (c.f.\ figure ((\ref{fig:fa3Plots})) with the other two.  
The reference displacements seem to indicate an outward motion at the left and right side of the imaged domain in all the tumor cases (even in the 8 other cases shown in the appendix). 
The smoothed measurements and the predicted displacements from \spreme,\ however, do not indicate this type of motion in all the tumor cases. This outward motion predicted by an iterative inversion code is probably an artifact due to the assumed boundary condition of traction free sides on the left and right sides of the imaged domain, and an implied symmetry.  These assumptions 
may not be always valid in practice.  

In contrast to the lateral displacement fields, the lateral strain fields recovered from \spreme\ 
were remarkably similar to the reference strains in all cases. This is surprising because the reconstructed lateral displacements were not always similar to the reference lateral displacements. This implies that the dominant difference between the \spreme\ and reference lateral displacement may be a rigid body mode.

\begin{figure}[!ht]
\centering
%\begin{tabular*}{1.0\textwidth}{@{\extracolsep{\fill}}m{0.5mm}ccc}
%\rotatebox{90}{Lateral disp.} & \figiii{idc2_uxmS.pdf} & \figiii{idc2_uxr.pdf} & \figiii{idc2_uxIti.pdf} \\
%\rotatebox{90}{Axial disp.} & \figiii{idc2_uym.pdf} & \figiii{idc2_uyr.pdf} & \figiii{idc2_uyIti.pdf} \\
%\rotatebox{90}{Axial strain} & \figiii{idc2_eyym.pdf} & \figiii{idc2_eyyr.pdf} & \figiii{idc2_eyyIti.pdf} \\
%\rotatebox{90}{Lateral strain} & \figiii{idc2_exxm.pdf} & \figiii{idc2_exxr.pdf} & \figiii{idc2_exxIti.pdf} \\
%       & Measured & Reconstructed (\spreme) & Reference
%\end{tabular*}
          \includegraphics*{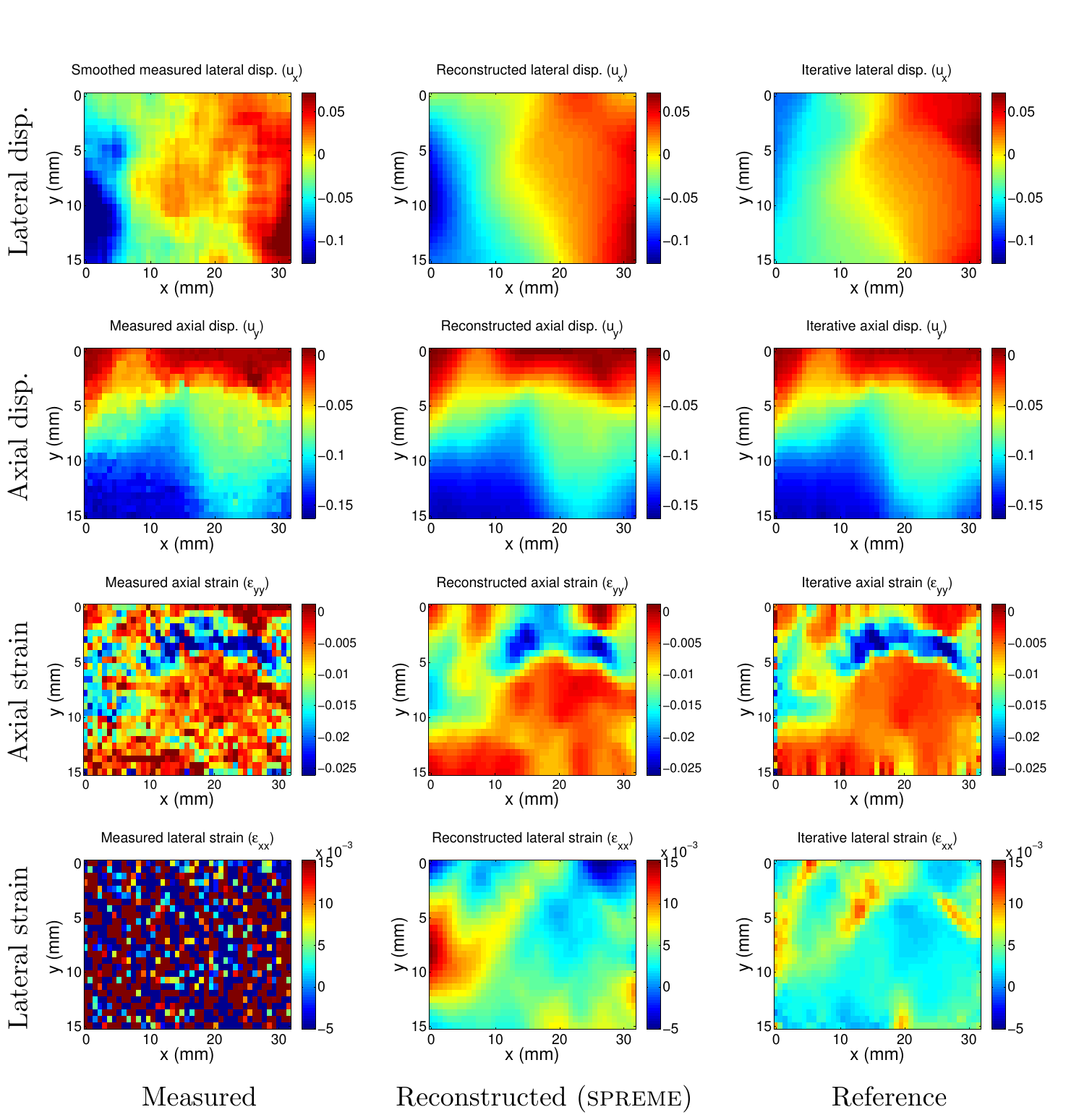}
  \caption{Displacement and strain images for invasive ductal carcinoma 2. 
In this example, we see more consensus between the gross features of 
the smoothed measurements, 
the reconstructed \spreme\ lateral displacements, 
and the reference lateral displacements.  
}
  \label{fig:idc2Plots}
\end{figure}

\section{Summary and conclusions}
\label{sec:summConc}
In this paper we consider the problem of reconstructing both components of a $2D$ vector displacement field in 
a heterogenous elastic medium, given a precise measurement of one of the two components and a noisy measurement of the other.  
This problem is motivated by applications in quasistatic ultrasound elastography, in which technological limitations allow the very precise measurement of one component of displacement, 
but impede the measurement of others.

Under the assumption that the material is piecewise homogenous, we proved 
that the momentum equations and knowledge of one component of the displacement field determines the other displacement component up to four undetermined coefficients.  
We then went on to derive a variational formulation that can exploit the condition of piecewise homogeneity without other {\em a priori} knowledge of modulus distribution.  

We then showed verification, validation, and application of 
a finite element implementation from this variational formulation.  The verification was performed 
on simulated data and showed that the method described here converges quickly, and dramatically reduces error in the noisy (uncertain) displacement component.  This came at the expense of a slight increase in noise in the precise components.  Nevertheless, in the verification studies conducted, the overall strain error reduced from roughly $O(1)$ (several hundred percent) to $O(10^{-1}-10^{-2})$ (a few percent).  
The validation studies showed that this method produces realistic displacement fields from 
measurements in tissue mimicking phantoms.  The predicted displacements tend to be more consistent with measurements than those of a reference method in the literature.  
Finally, the application of \spreme\ to data collected in-vivo demonstrates that \spreme\ is 
sufficiently robust to work in the application domain that originally motivated its development.   

Further improvements in lateral displacement estimation with \spreme\ are possible.  
Since \spreme\ is a post-processing method, it 
can be used in conjunction with the other lateral displacement estimation techniques outlined in the introduction to obtain more precise estimates of the displacement field. Since these methods produce improved measurements of the lateral displacements, then relatively large weights can potentially be used on the lateral displacements when processing them with \spreme.\ This processing can be done efficiently because \spreme\ converges in very few iterations.

An area that needs further work is finding a systematic method to identify good choices of 
the algorithmic parameters in \spreme.  Regularization parameter selection is an open and active area of research in inverse problems, and is not addressed in the present 
study.  

The \spreme\ approach may be generalized to other applications, as described in 
\cite{babaniyi:msthesis}. 
The idea is that one identifies the form of the momentum equation 
that results for a homogeneous material property distribution.  This equation is then enforced almost everywhere, as indicated in equation (\ref{eq:f3}).   See \cite{babaniyi:msthesis} for details.

\section*{Acknowledgments}
The authors will like to thank Prof.\ Timothy J.\ Hall and members of his lab at the University of Wisconsin for sharing the phantom and in-vivo displacement data.  The support of NIH Grant No. NCI-R01CA140271, and NSF Grant No. 1148124, and 1148111 is gratefully acknowledged.

\appendix
\section{Formulation without compatibility}
\label{app:1stFormu}
Given $\u_{m}(x,y)$, $\x\in\Omega$ and $\bep^{^{(k-1)}}$, find $\bep^{{k}}$ that minimizes:
\beq
\pi[\bep^k] = \pi_{_O} + \pi_{_R} \label{eq:ap1}
\eeq
where:
\beq
%\pi_{_O}[\bep] = \frac{1}{2}\|\big{[}\nabla\u_{m} + (\nabla\u_{m})^{\mbox{T}} - 2\bep \big{]}\|_{_{\mbox{T}}}^{2} \\ 
\pi_{_O} = \frac{1}{2} \int_{\Omega} \left( \nabla\u_{m} + (\nabla\u_{m})^{\mbox{T}} - 2\bep^k \right) \cdot \T \left( \nabla\u_{m} + (\nabla\u_{m})^{\mbox{T}} - 2\bep^k \right) \, d{\Omega} \label{eq:ap2}
\eeq
and:
\beq
\pi_{_{R}} = \frac{\alpha_{o}}{2}\int_{\Omega} \frac{\left(\nabla \cdot \A(\bep^{{k}})\right)^{2}}{\Big[\left(\nabla \cdot \A(\bep^{^{(k-1)}})\right)^{2} + \delta\Big]^n} \, d{\Omega}. \label{eq:ap3}
\eeq

In equation (\ref{eq:ap2}), $\T$ is a fourth order weighting tensor that allows more importance to be placed on the more accurate strain component. So in ultrasound measurements where the axial strains ($\ep_{yy}$) are typically more accurate than the lateral and shear strains ($\ep_{xx}, \ep_{xy}$), then the $T_{yyyy}$ weights  will be larger than $T_{xxxx}$ and $T_{xyxy}$. 

Equation (\ref{eq:ap1}) was minimized with respect to $\ep$ to get the weak form using the same method outlined in section \ref{ssec:VarFomu}. The weak form was then discretized with finite element bilinear shape functions to get a linear system of equations. A finite element within an in-house FEM code was created to solve the linear system of equations. 
\begin{figure}[!ht]
\centering
%  \begin{minipage}[t]{0.32\linewidth}\centering
%    \includegraphics*[viewport=94 231 512 560,scale=0.31]{figures/eyyFEM_1.pdf}
%%   \centerline{(a)}
%  \end{minipage}\hfill
%  \begin{minipage}[t]{0.32\linewidth}\centering
%    \includegraphics*[viewport=94 231 512 560,scale=0.31]{figures/exxFEM_1.pdf}
%%    \centerline{(b)}
%  \end{minipage}
%  \begin{minipage}[t]{0.32\linewidth}\centering
%    \includegraphics*[viewport=94 231 512 560,scale=0.31]{figures/exyFEM_1.pdf}
%%    \centerline{(b)}
%  \end{minipage}
          \includegraphics*{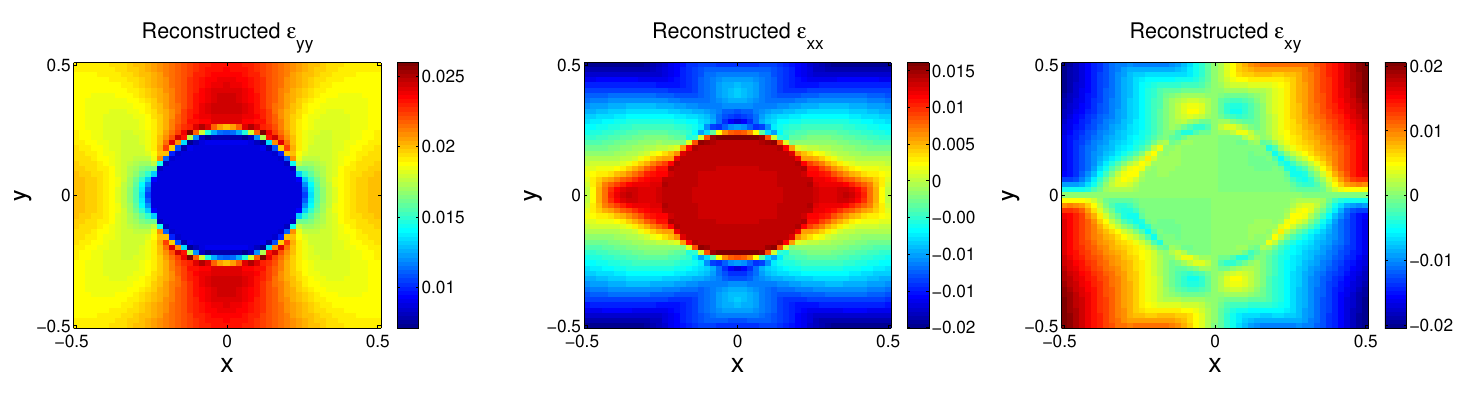}
  \caption{Reconstructed strains at fifth iteration.}
  \label{fig:recStrNoc}
\end{figure}

This formulation was tested with the perfect input displacements shown in figure (\ref{fig:targDisp}). The following parameters were used to run the code: $T_{yyyy} = 1$, $T_{xxxx} = T_{xyxy} = 1\times10^{-9}$, $\alpha = 1\times10^{-6}$, $\delta = 1\times10^{-8}$, $n=0.5$. These weights were chosen in order to simulate ultrasound measurements where the lateral and shear strains are not well known. Five iterations were performed, and the reconstructions obtained at the fifth iteration are shown in figure (\ref{fig:recStrNoc}). The lateral and shear strains look very different from the target distributions shown in figure 
(\ref{fig:test1strain}).
\begin{table}[!ht]
\begin{center}
        \begin{tabular}{ | p{3.0cm} | c | }
        \hline
        \centering Field variable & Error in reconstructed field (\%) \\ \hline   
        \centering $\ep_{xx}$ & $125$ \\ \hline
        \centering $\ep_{yy}$ & $1.41$ \\ \hline
        \centering $\ep_{xy}$ & $695$ \\ \hline
        \end{tabular}
\end{center}
\caption{Errors in the recovered strains after the fifth iteration.}
\label{tab:errStrVals}
\end{table}

To quantitatively determine how accurate the reconstructions were, the errors between the reconstructed strains at the 5th iteration and target strains were computed using the formula shown in equation (\ref{eq:eNorm_str}). The results of this computation are displayed in table (\ref{tab:errStrVals}). These results show that the lateral and shear strains are not accurate. 

Another method used to check the output strains from the FE code is the compatibility equations which are shown below \cite{book:atkinFox}:
\beq
\ep_{xx,yy} + \ep_{yy,xx} = 2\ep_{xy,xy}. \label{eq:ap4}
\eeq

The equation used to check the compatibility condition is slightly different from the equation shown in (\ref{eq:ap4}). The derivation of the equation used is shown below:
\bea
2\ep_{xy} &=& u_{x,y} + u_{y,x} \\
2\ep_{xy} - u_{x,y} &=& u_{y,x} \\
(2\ep_{xy} - u_{x,y})_{,y} &=& u_{y,xy} \\
(2\ep_{xy} - u_{x,y})_{,y} - \epsilon_{yy,x}&=& \eta. \label{eq:ap5}
\eea

In (\ref{eq:ap5}), $\eta$ is introduced as a measure of incompatibility. Setting $\eta = 0$ in (\ref{eq:ap5}), and taking its derivative with respect to $x$, we see that it is the same as equation (\ref{eq:ap4}). Therefore by knowing $u_x, \epsilon_{xy}$, and $\epsilon_{yy}$, we can use equation (\ref{eq:ap5}) to check that the strains from the FE code satisfy the compatibility equation. $\eta$ was evaluated pointwise within the domain for each of the experiments performed. To get a sense of how well the compatibility equation is satisfied for each experiment, we calculate the L2 norm. The formula used to evaluate the L2 norm is defined below: 
\beq
||\eta|| = \sqrt{\int_{\Omega} \eta^2 \ d\Omega}
%const = \sqrt{c_{_{ij}}c_{_{ij}}}
\eeq
%where $\eta$ is the compatibility residual which we get by evaluating equation (\ref{eq:ap5}).
%\bea
%\int_{\Omega} \eta^2 \ d\Omega = \sum_{n = 1}^{N_{el}} \int_{\Omega_{el}(n)} \eta^2 d\Omega_{el} &\approx& \sum_{n = 1}^{N_{el}} \eta^2 (n) \Omega^e(n) \\
%&=& \eta_{_{ij}}\eta_{_{ij}}\Omega^e \\
%&=& \frac{\eta_{_{ij}}\eta_{_{ij}}}{2500} = ||\eta||^2 \label{eq:l2norm1}
%\eea

%In the above equation, the area of each element in the domain is $\left(\frac{1}{50}\right)^2$ because we used a 1$\times$1 domain discretized into 50 elements in both the x and y directions. 
The L2 norm of $\eta$ was computed for a large range of experiments where the weights were varied, and the parameters $\alpha$, $\delta$, and $n$ were fixed to $1\times10^{-6}$, $1\times10^{-8}$, and $0.5$, respectively. The results from these experiments are shown in table (\ref{tab:compRes}). The results on the table demonstrate that the compatibility equation becomes increasingly violated as $T_{xxxx}$ and $T_{xyxy}$ are reduced, and the level of violation is maximum when $T_{xxxx} = T_{xyxy} = 10^{-5}$. The mesh size for the experiment is $0.02$ therefore any value of $||\eta||$ smaller than this mesh size can be neglected as discretization error. When $T_{xxxx}$ and $T_{xyxy}$ becomes smaller than $10^{-3}$, then the compatibility equation becomes violated for the experiment. This means that this alternate formulation cannot be used to calculate correct strains when the weights for the lateral and shear strain components are too small. When the weights for some of the strain components are small, the FE code constrains the corresponding strains with small weights to satisfy the equilibrium equations, but the strains are not forced to come from a single valued continuous displacement field. This formulation will therefore not work on real measured data where we would wish to use small weights to exclude the strain components calculated from the imprecisely measured displacement component.

\begin{table}%[!ht]
\begin{center}
        \begin{tabular}{ | p{3.0cm} | p{3.0cm} | p{3.0cm}| c | }
        \hline
        \centering $T_{yyyy}$ & \centering $T_{xxxx}$ & \centering $T_{xyxy}$ & $||\eta||$ \\ \hline
        \centering 1 & \centering 1 & \centering 1 & 0.013 \\ \hline
        \centering 1 & \centering $10^{-1}$ & \centering $10^{-1}$ & 0.013 \\ \hline
        \centering 1 & \centering $10^{-2}$ & \centering $10^{-2}$ & 0.013 \\ \hline
        \centering 1 & \centering $10^{-3}$ & \centering $10^{-3}$ & 0.017 \\ \hline
        \centering 1 & \centering $10^{-4}$ & \centering $10^{-4}$ & 0.093 \\ \hline
        \centering 1 & \centering $10^{-5}$ & \centering $10^{-5}$ & 0.127 \\ \hline
        \centering 1 & \centering $10^{-9}$ & \centering $10^{-9}$ & 0.127 \\ \hline
        \centering 1 & \centering 0 & \centering 0 & 0.127 \\ \hline
        \end{tabular}
\end{center}
\caption{Compatibility results using various weights.}
\label{tab:compRes}
\end{table}

By way of comparison, the \spreme\ formulation yields $||\eta|| \approx 9 \times 10^{-3}$.

\section{Clinical data reconstructions}
\label{app:cDataImgs}
In this section, for completeness, we show the reconstructions obtained from processing the rest of the in-vivo data 
from \cite{goenezen2012linear}.  The observations made in earlier apply to these results as well.  
\begin{figure}[!ht]
\centering
%\begin{tabular*}{1.0\textwidth}{@{\extracolsep{\fill}}m{0.5mm}ccc}
%\rotatebox{90}{Lateral disp.} & \figiii{fa1_uxmS.pdf} & \figiii{fa1_uxr.pdf} & \figiii{fa1_uxIti.pdf} \\
%\rotatebox{90}{Axial disp.} & \figiii{fa1_uym.pdf} & \figiii{fa1_uyr.pdf} & \figiii{fa1_uyIti.pdf} \\
%\rotatebox{90}{Axial strain} & \figiii{fa1_eyym.pdf} & \figiii{fa1_eyyr.pdf} & \figiii{fa1_eyyIti.pdf} \\
%\rotatebox{90}{Lateral strain} & \figiii{fa1_exxm.pdf} & \figiii{fa1_exxr.pdf} & \figiii{fa1_exxIti.pdf} \\
%       & Measured & Reconstructed (\spreme) & Reference
%\end{tabular*}
          \includegraphics*{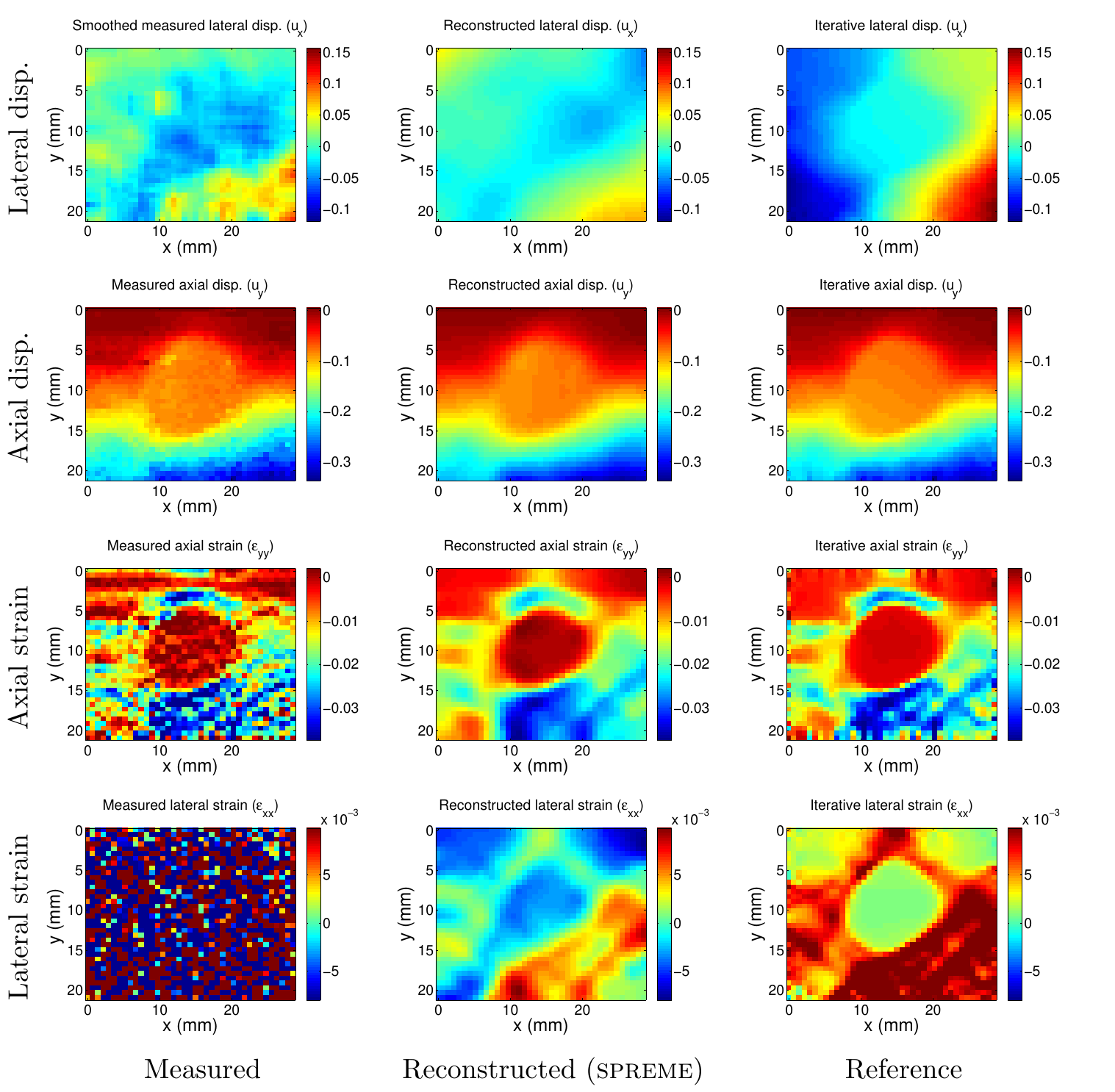}
  \caption{Displacement and strain images for fibroadenoma 1}
  \label{fig:fa1Plots}
\end{figure}

\begin{figure}[!ht]
\centering
%\begin{tabular*}{1.0\textwidth}{@{\extracolsep{\fill}}m{0.5mm}ccc}
%\rotatebox{90}{Lateral disp.} & \figiii{fa2_uxmS.pdf} & \figiii{fa2_uxr.pdf} & \figiii{fa2_uxIti.pdf} \\
%\rotatebox{90}{Axial disp.} & \figiii{fa2_uym.pdf} & \figiii{fa2_uyr.pdf} & \figiii{fa2_uyIti.pdf} \\
%\rotatebox{90}{Axial strain} & \figiii{fa2_eyym.pdf} & \figiii{fa2_eyyr.pdf} & \figiii{fa2_eyyIti.pdf} \\
%\rotatebox{90}{Lateral strain} & \figiii{fa2_exxm.pdf} & \figiii{fa2_exxr.pdf} & \figiii{fa2_exxIti.pdf} \\
%       & Measured & Reconstructed (\spreme) & Reference
%\end{tabular*}
          \includegraphics*{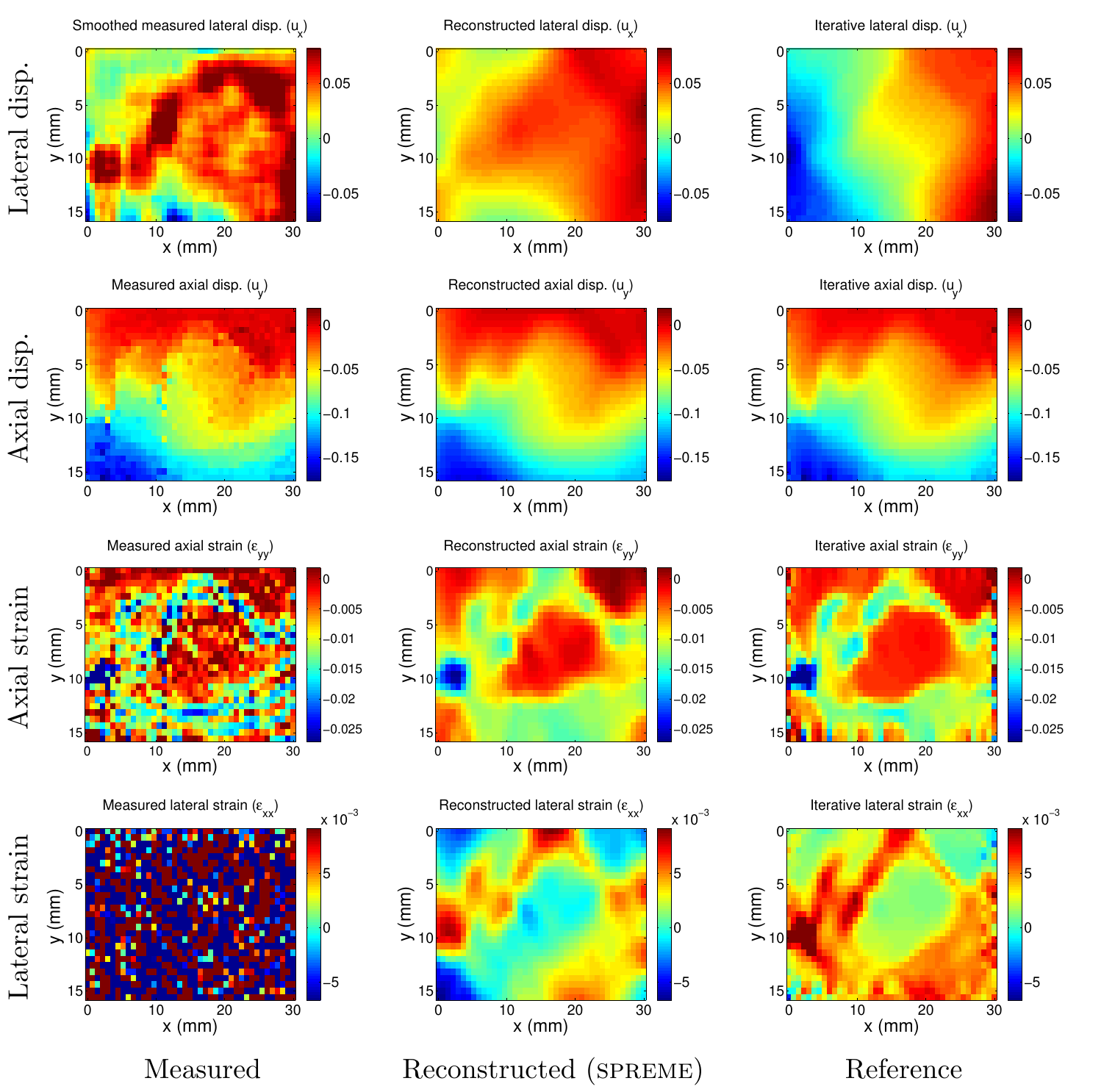}
  \caption{Displacement and strain images for fibroadenoma 2}
  \label{fig:fa2Plots}
\end{figure}

\begin{figure}[!ht]
\centering
%\begin{tabular*}{1.0\textwidth}{@{\extracolsep{\fill}}m{0.5mm}ccc}
%\rotatebox{90}{Lateral disp.} & \figiii{fa4_uxmS.pdf} & \figiii{fa4_uxr.pdf} & \figiii{fa4_uxIti.pdf} \\
%\rotatebox{90}{Axial disp.} & \figiii{fa4_uym.pdf} & \figiii{fa4_uyr.pdf} & \figiii{fa4_uyIti.pdf} \\
%\rotatebox{90}{Axial strain} & \figiii{fa4_eyym.pdf} & \figiii{fa4_eyyr.pdf} & \figiii{fa4_eyyIti.pdf} \\
%\rotatebox{90}{Lateral strain} & \figiii{fa4_exxm.pdf} & \figiii{fa4_exxr.pdf} & \figiii{fa4_exxIti.pdf} \\
%       & Measured & Reconstructed (\spreme) & Reference
%\end{tabular*}
          \includegraphics*{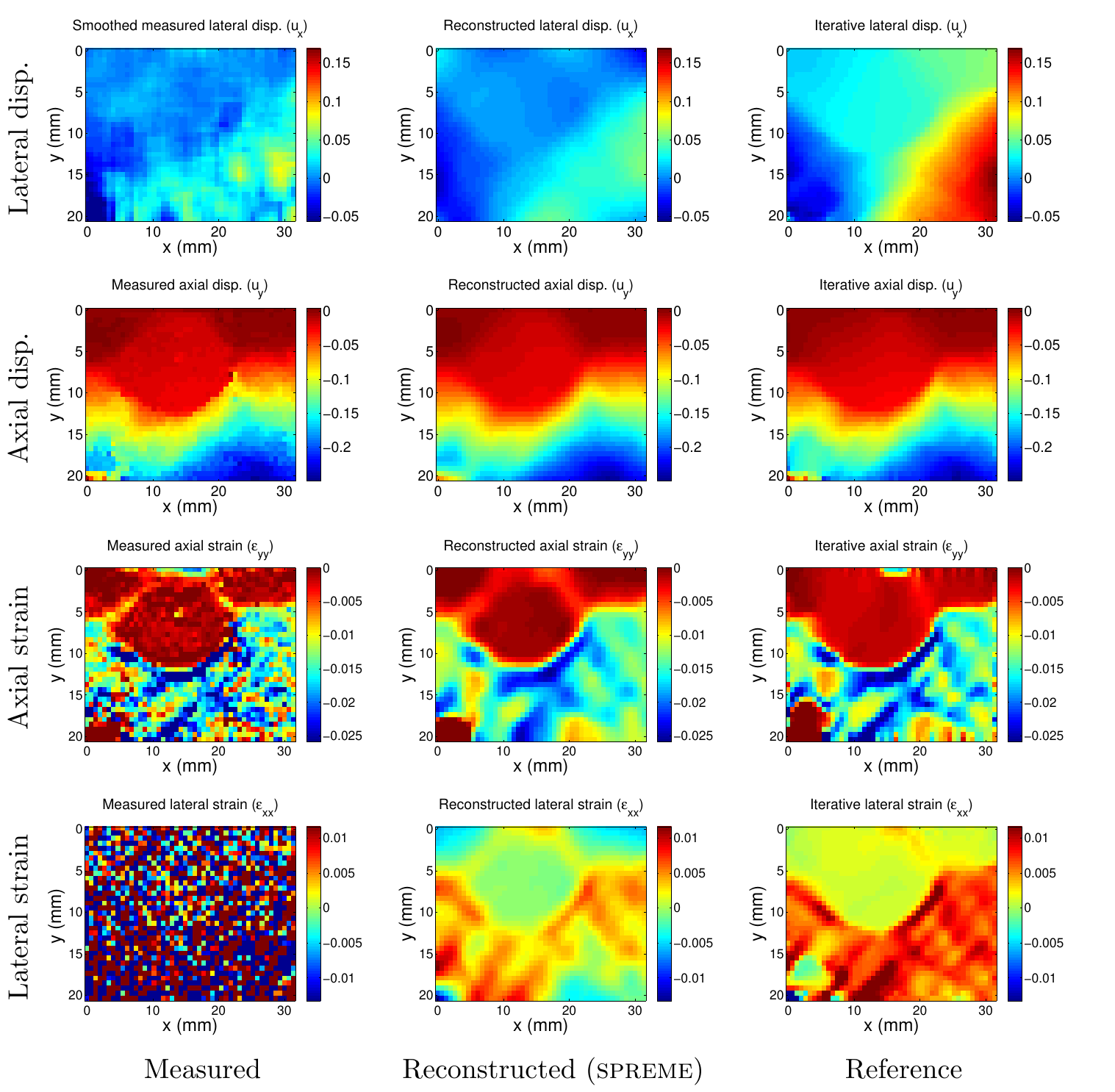}
  \caption{Displacement and strain images for fibroadenoma 4}
  \label{fig:fa4Plots}
\end{figure}

\begin{figure}[!ht]
\centering
%\begin{tabular*}{1.0\textwidth}{@{\extracolsep{\fill}}m{0.5mm}ccc}
%\rotatebox{90}{Lateral disp.} & \figiii{fa5_uxmS.pdf} & \figiii{fa5_uxr.pdf} & \figiii{fa5_uxIti.pdf} \\
%\rotatebox{90}{Axial disp.} & \figiii{fa5_uym.pdf} & \figiii{fa5_uyr.pdf} & \figiii{fa5_uyIti.pdf} \\
%\rotatebox{90}{Axial strain} & \figiii{fa5_eyym.pdf} & \figiii{fa5_eyyr.pdf} & \figiii{fa5_eyyIti.pdf} \\
%\rotatebox{90}{Lateral strain} & \figiii{fa5_exxm.pdf} & \figiii{fa5_exxr.pdf} & \figiii{fa5_exxIti.pdf} \\
%       & Measured & Reconstructed (\spreme) & Reference
%\end{tabular*}
          \includegraphics*{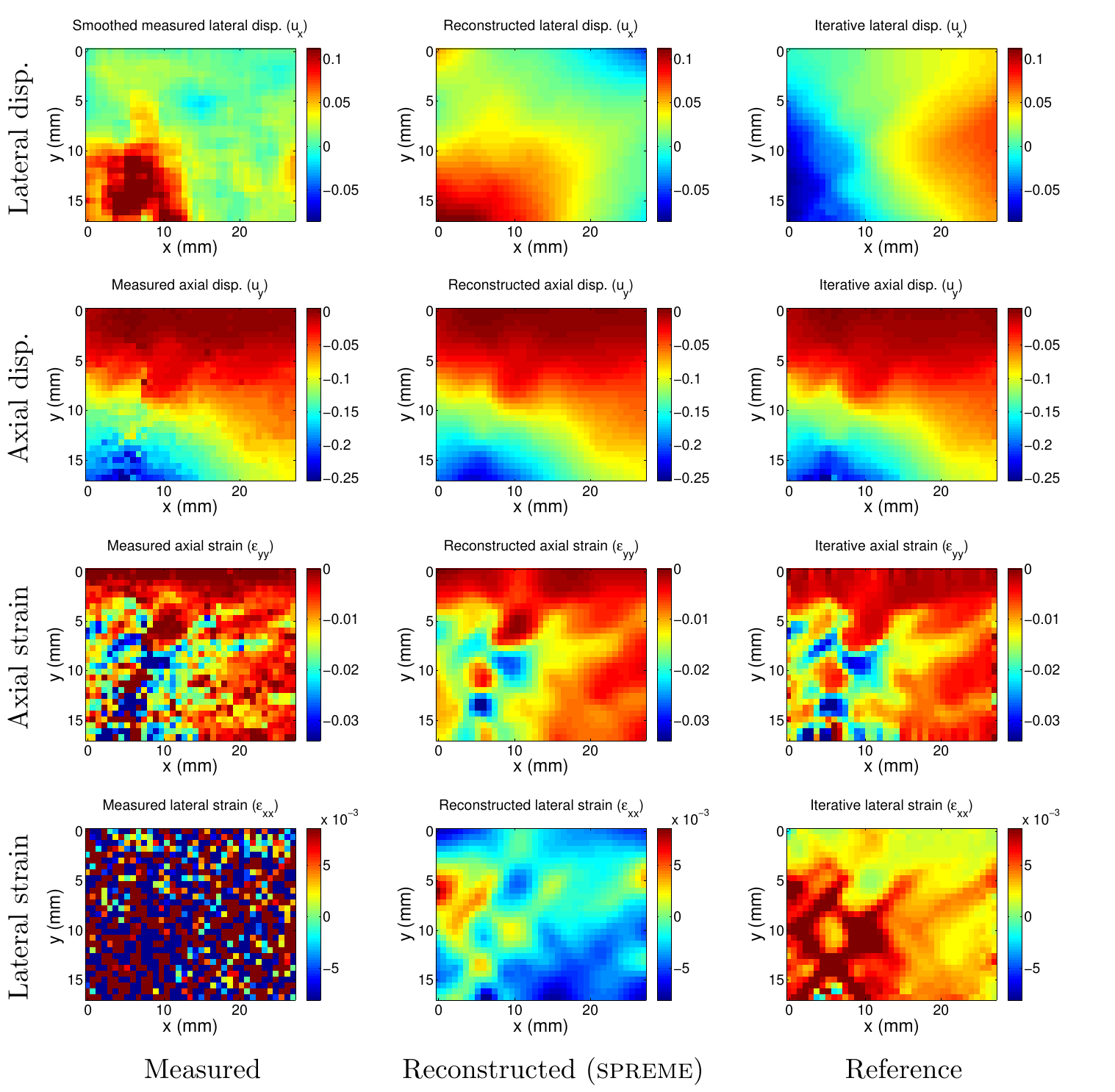}
  \caption{Displacement and strain images for fibroadenoma 5}
  \label{fig:fa5Plots}
\end{figure}

\begin{figure}[!ht]
\centering
%\begin{tabular*}{1.0\textwidth}{@{\extracolsep{\fill}}m{0.5mm}ccc}
%\rotatebox{90}{Lateral disp.} & \figiii{idc1_uxmS.pdf} & \figiii{idc1_uxr.pdf} & \figiii{idc1_uxIti.pdf} \\
%\rotatebox{90}{Axial disp.} & \figiii{idc1_uym.pdf} & \figiii{idc1_uyr.pdf} & \figiii{idc1_uyIti.pdf} \\
%\rotatebox{90}{Axial strain} & \figiii{idc1_eyym.pdf} & \figiii{idc1_eyyr.pdf} & \figiii{idc1_eyyIti.pdf} \\
%\rotatebox{90}{Lateral strain} & \figiii{idc1_exxm.pdf} & \figiii{idc1_exxr.pdf} & \figiii{idc1_exxIti.pdf} \\
%       & Measured & Reconstructed (\spreme) & Reference
%\end{tabular*}
          \includegraphics*{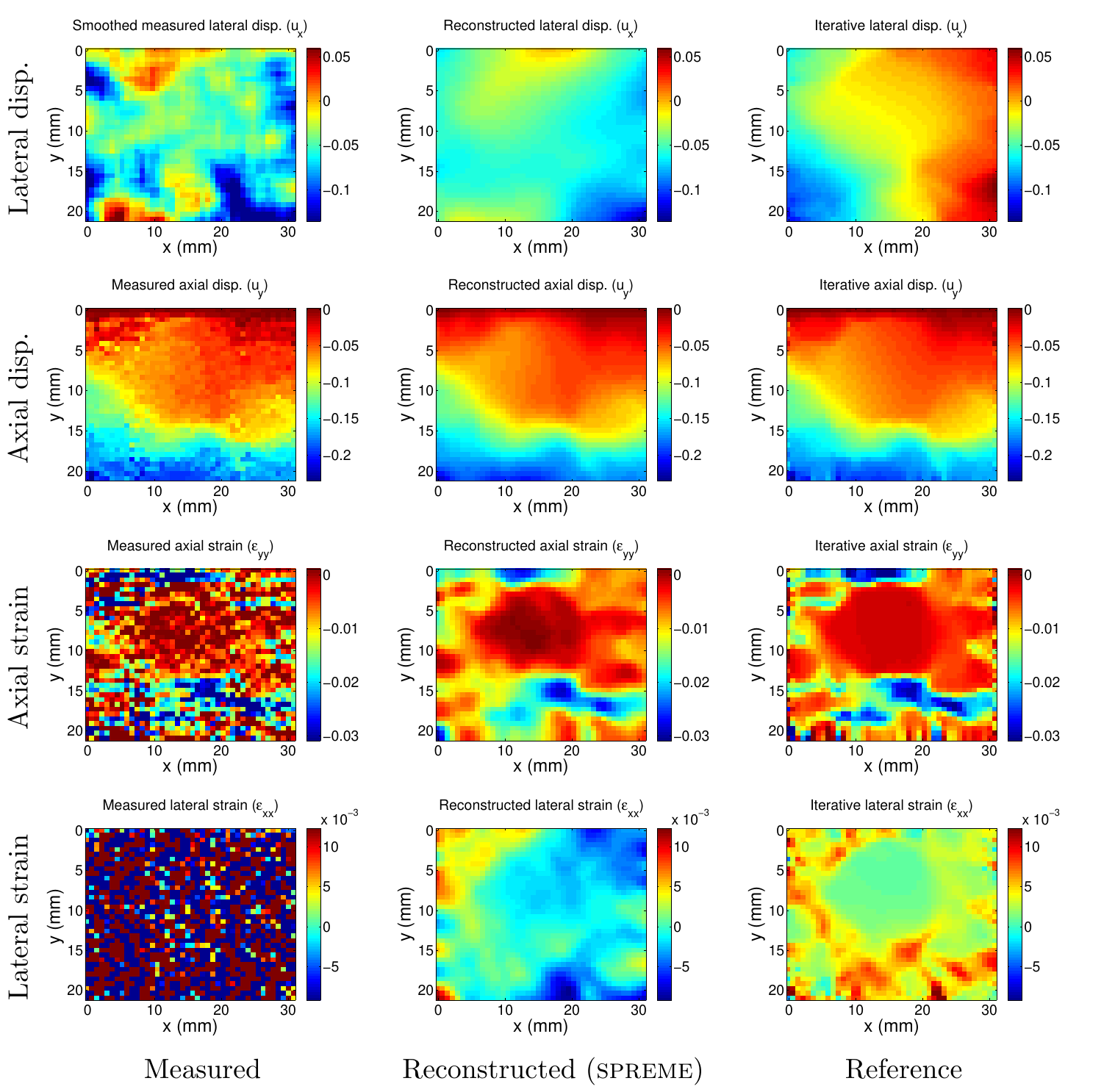}
  \caption{Displacement and strain images for invasive ductal carcinoma 1}
  \label{fig:idc1Plots}
\end{figure}

\begin{figure}[!ht]
\centering
%\begin{tabular*}{1.0\textwidth}{@{\extracolsep{\fill}}m{0.5mm}ccc}
%\rotatebox{90}{Lateral disp.} & \figiii{idc3_uxmS.pdf} & \figiii{idc3_uxr.pdf} & \figiii{idc3_uxIti.pdf} \\
%\rotatebox{90}{Axial disp.} & \figiii{idc3_uym.pdf} & \figiii{idc3_uyr.pdf} & \figiii{idc3_uyIti.pdf} \\
%\rotatebox{90}{Axial strain} & \figiii{idc3_eyym.pdf} & \figiii{idc3_eyyr.pdf} & \figiii{idc3_eyyIti.pdf} \\
%\rotatebox{90}{Lateral strain} & \figiii{idc3_exxm.pdf} & \figiii{idc3_exxr.pdf} & \figiii{idc3_exxIti.pdf} \\
%       & Measured & Reconstructed (\spreme) & Reference
%\end{tabular*}
          \includegraphics*{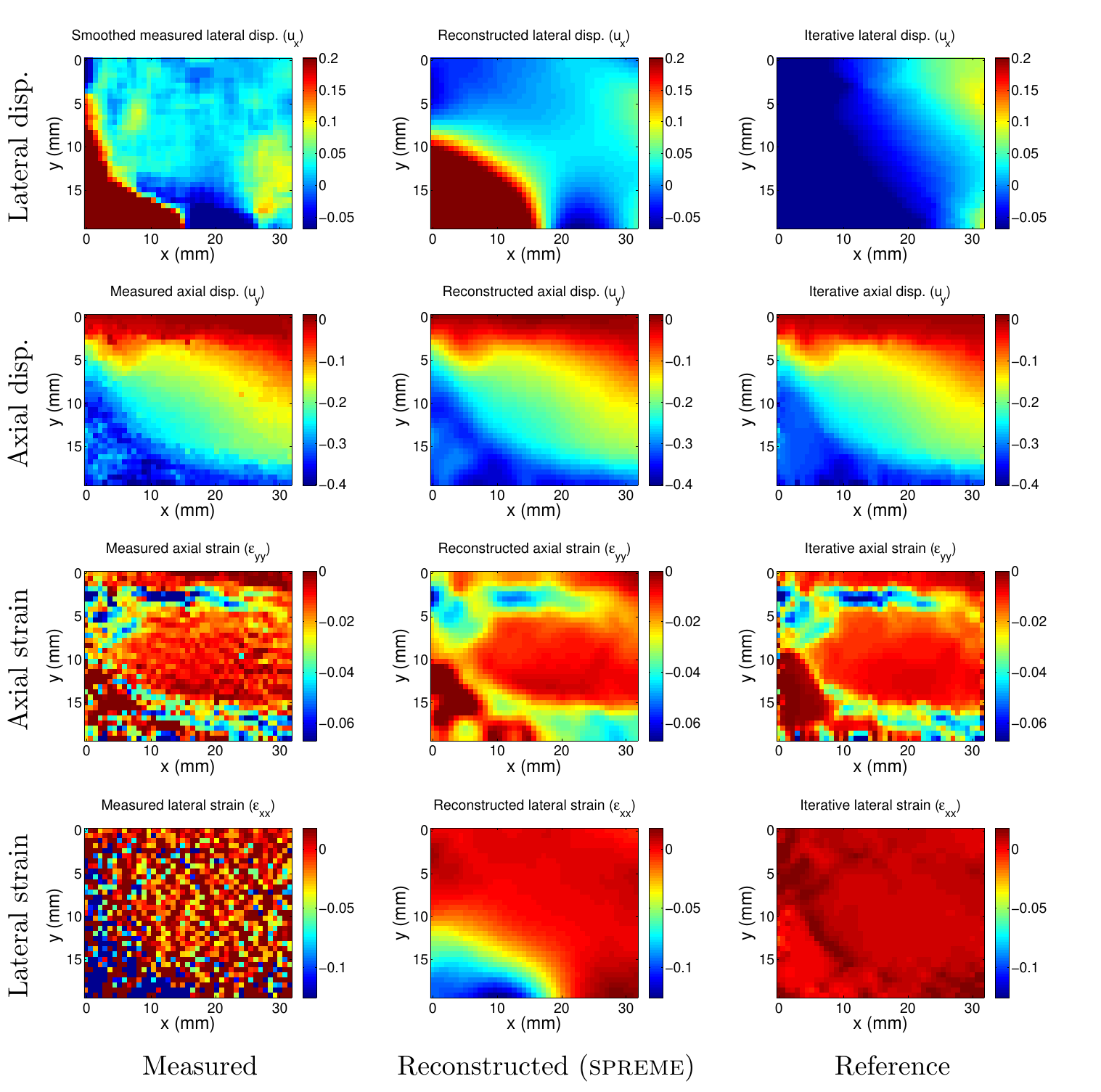}
  \caption{Displacement and strain images for invasive ductal carcinoma 3}
  \label{fig:idc3Plots}
\end{figure}

\begin{figure}[!ht]
\centering
%\begin{tabular*}{1.0\textwidth}{@{\extracolsep{\fill}}m{0.5mm}ccc}
%\rotatebox{90}{Lateral disp.} & \figiii{idc4_uxmS.pdf} & \figiii{idc4_uxr.pdf} & \figiii{idc4_uxIti.pdf} \\
%\rotatebox{90}{Axial disp.} & \figiii{idc4_uym.pdf} & \figiii{idc4_uyr.pdf} & \figiii{idc4_uyIti.pdf} \\
%\rotatebox{90}{Axial strain} & \figiii{idc4_eyym.pdf} & \figiii{idc4_eyyr.pdf} & \figiii{idc4_eyyIti.pdf} \\
%\rotatebox{90}{Lateral strain} & \figiii{idc4_exxm.pdf} & \figiii{idc4_exxr.pdf} & \figiii{idc4_exxIti.pdf} \\
%       & Measured & Reconstructed (\spreme) & Reference
%\end{tabular*}
          \includegraphics*{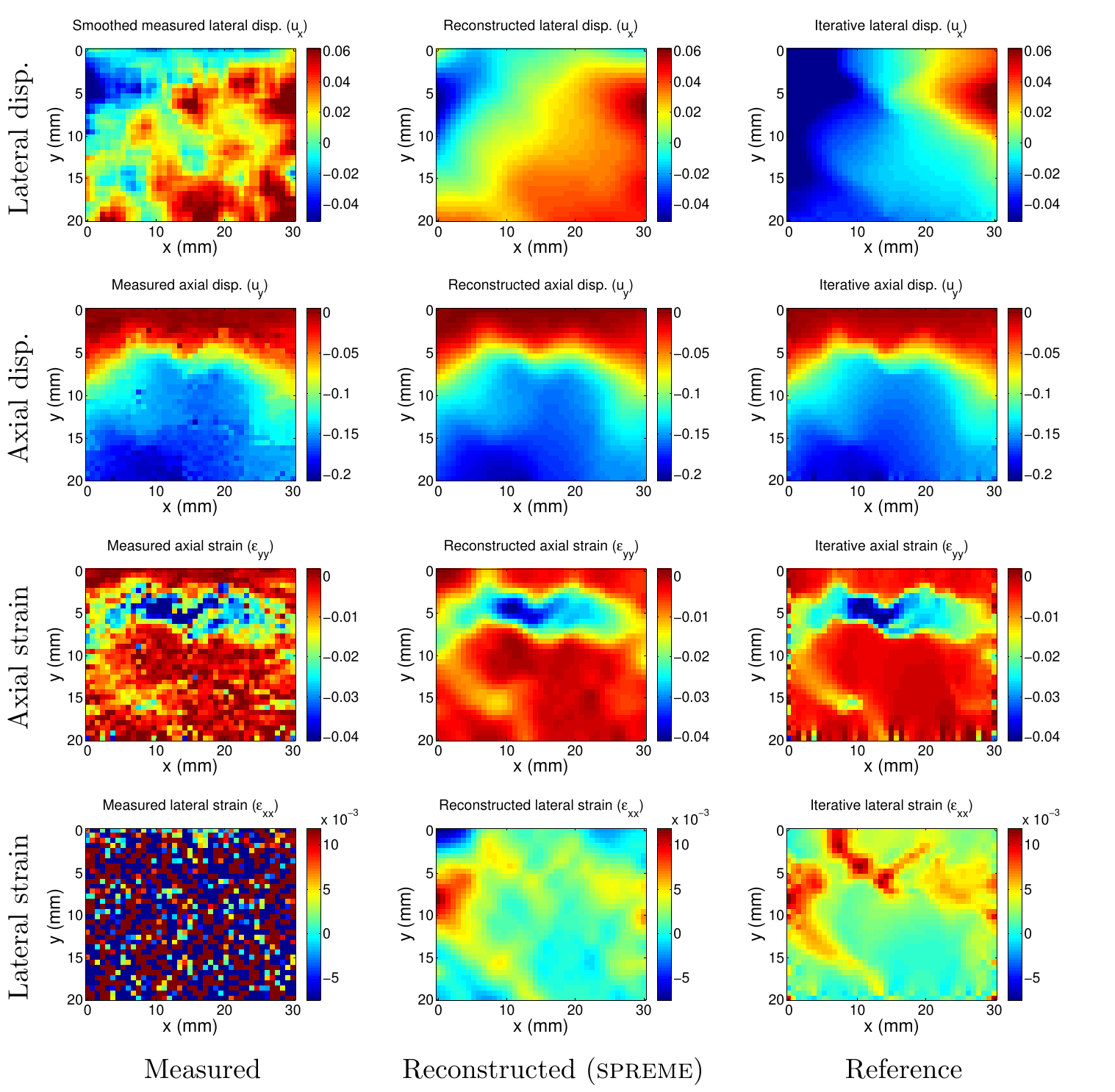}
  \caption{Displacement and strain images for invasive ductal carcinoma 4}
  \label{fig:idc4Plots}
\end{figure}

\begin{figure}[!ht]
\centering
          \includegraphics*{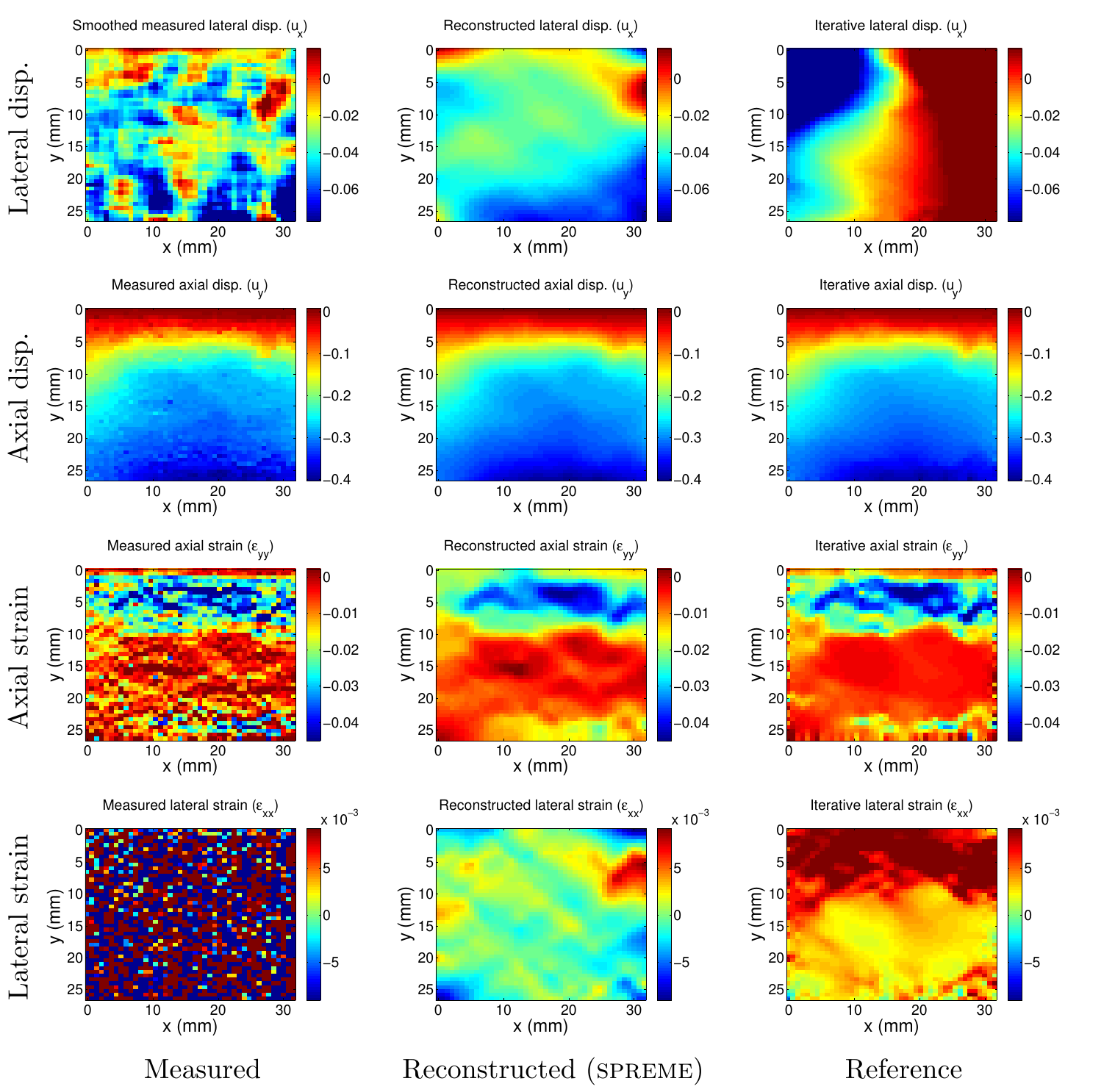}
  \caption{Displacement and strain images for invasive ductal carcinoma 5}
  \label{fig:idc5Plots}
\end{figure}

\clearpage
\bibliographystyle{apalike}
\bibliography{./dispProcess.bib}

\end{document}